\documentclass[11pt]{amsart}

\usepackage{amsmath}
\usepackage{amssymb}
\usepackage{mathrsfs}  
\usepackage{graphicx}


\let\tilde=\widetilde

\newcommand\R{\mathbb{R}}

\newcommand\N{\mathbb{N}}
\newcommand\K{\mathbb{K}}
\newcommand\M{\mathbb{M}}
\newcommand\X{\widehat{X}}
\newcommand\B{\mathcal{B}}
\newcommand{\cF}{\mathcal{F}}
\newcommand\A{\mathcal{F}_{\exp}(\K_0(X))}
\newcommand\F{\mathscr{F}}

\renewcommand\i{{\rm 1\kern -.3600em 1}}
\newcommand\n{{(n)}}

\newtheorem{theorem}{Theorem}[section]
\newtheorem{corollary}[theorem]{Corollary}

\newtheorem{proposition}[theorem]{Proposition}
\theoremstyle{definition}
\newtheorem{definition}[theorem]{Definition}
\newtheorem{example}[theorem]{Example}
\newtheorem{remark}[theorem]{Remark}

\numberwithin{equation}{section}

\title[Analysis on the cone of discrete Radon measures ]{Analysis on the cone of discrete Radon measures}

\author[D. Finkelshtein]{Dmitri Finkelshtein}
\address[D. Finkelshtein]{Department of Mathematics, Swansea University, Bay Campus, Swansea SA1 8EN, UK}
\email{{\tt d.l.finkelshtein@swansea.ac.uk}}

\author[Y. Kondratiev]{\fbox{Yuri Kondratiev}}

\author[P. Kuchling]{Peter Kuchling}
\address[P. Kuchling]{Hochschule Bielefeld -  University of Applied Sciences and Arts, 33615 Bielefeld, Germany}
\email{\tt peter.kuchling@hsbi.de}

\author[E. Lytvynov]{Eugene Lytvynov}
\address[E. Lytvynov]{Department of Mathematics, Swansea University, Bay Campus, Swansea SA1 8EN, UK}
\email{{\tt e.lytvynov@swansea.ac.uk}}

\author[M.~J. Oliveira]{Maria Jo\~{a}o Oliveira}
\address[M.~J. Oliveira]{DCeT, Universidade Aberta, P 1269-001 Lisbon, Portugal; CMAFcIO, University of Lisbon, 1749-016 Lisbon, Portugal}
\email{\tt mjoliveira@ciencias.ulisboa.pt}

\keywords{Cone of discrete Radon measures; correlation measures and correlation functions; harmonic analysis; finite-difference calculus; Gamma measure}

\subjclass[2010]{60K35,28C05,43A99}


\begin{document}

\begin{abstract}
We study analysis on the cone of discrete Radon measures over a locally compact Polish space $X$. We discuss probability measures on the cone and the corresponding correlation measures and correlation functions on the sub-cone of finite discrete Radon measures over $X$. For this, we consider on the cone an analogue of the harmonic analysis on the configuration space  developed in \cite{harmana}. We also study elements of finite-difference calculus on the cone: we introduce discrete birth-and-death gradients and study the corresponding Dirichlet forms; finally, we discuss a system of polynomial functions on the cone which satisfy the binomial identity.
\end{abstract}

\maketitle

\begin{center}
  \it In memory of   Prof.~Anatoly Vershik
 \end{center}


\section{Introduction}

Let $X$ be a locally compact Polish space and $\mathcal{B}(X)$ be the corresponding Borel $\sigma$-algebra on $X$. Let $\M(X)$ be the space of real-valued Radon measures on $(X,\mathcal{B}(X))$.

A random measure on $X$ is a random element of the space $\M(X)$. The theory of random measures plays an essential role in the modern studies of probability and point processes, see e.g. \cite{DVJ2, Kal2017}. The (marked) point processes can be interpreted as random \emph{discrete} (non-negative) measures. 
The cone of non-negative discrete Radon measures on $\mathcal{B}(X)$ is the set
\begin{equation}
  \K(X):=\left\{\eta=\sum_i s_i\delta_{x_i}\in\M(X): s_i>0, x_i\in X\right\}, \label{cone_def}
\end{equation}
where $\delta_{x_i}$ denotes the Dirac measure with unit mass at
$x_i\in X$. Here the atoms $x_i$ are assumed to be distinct and their total number is at most countable. By
convention, $\K(X)$ contains the zero measure $\eta=0$, which is represented by the sum over the empty set of
indices $i$. 

The probability measures on the cone $\K(X)$, in particular, the Gamma measure (see Section~2.2 below), are important e.g. for the respresentation theory of big groups \cite{GGV}, study of infinite-dimensional analogue of Lebesgue measure \cite{TVY}, constructions of orthogonal polynomials in the generalisations of the white noise analysis \cite{KL2000,KSSU1998}, etc.

The \emph{support} of a $\eta\in\K(X)$ is defined by
\[
\tau(\eta):=\{x\in X: s_x(\eta):=\eta(\{x\})>0\}.
\]
In particular, $\tau(0)=\emptyset$. If $\eta\in\K(X)$ is fixed, we just  write $s_x:=s_x(\eta)$. Thus, each
$\eta\in\K(X)$ can be rewritten in the form 
\[
\eta=\sum_{x\in\tau(\eta)}s_x\delta_x.
\]
For each $\eta\in\K(X)$ and a compact $\Lambda\in\mathcal{B}(X)$, we have that
\begin{equation*}
  \sum_{x\in\tau(\eta)\cap\Lambda}s_x = \eta(\Lambda)<\infty.
\end{equation*}
Hence, if we interpret $s_x$ as the weight at a point $x\in X$, the total weight in any compact region is finite. Nevertheless, for a typical measure on $\K(X)$,
\begin{equation}\label{infsup}
  |\tau(\eta)\cap\Lambda| = \infty
\end{equation}
for almost all $\eta\in\K(X)$ (henceforth, $|\cdot|$ denotes the cardinality of a discrete set); in other words, the configuration of points constituting the support of $\eta$ is not locally finite. The crucial assumption for \eqref{infsup} is that the measure that provides the distribution of weights $s_x>0$ is infinite on $(0,\infty)$. This makes  analysis on the cone very different from the case of weights (marks) with a finite distribution, the so-called marked configurations with a finite measure on marks, where the results are essentially pretty similar to the case without marks (when all $s_x=1$), see e.g. \cite{KSS1998,
KKS1998, KLU1999}.

The further studies of the analysis and probability on the cone were provided in several directions. The Gibbs perturbation of the Gamma measure was constructed and studied in \cite{MR3041709}, the Laplace operator  and the corresponding diffusion process on the cone were developed in \cite{KLV2015}, see also \cite{CKL2016}, a moment problem on the cone was analysed in \cite{KKL2015}, and some non-equilibrium birth-and-death dynamics on the cone were studied in \cite{FKK2021}.

The present paper deals with two other important topics of the analysis on the cone of discrete Radon measures: we discuss an analogue of the harmonic analysis on the cone, similar to \cite{harmana} and we introduce  elements of  finite-difference calculus on the cone, influenced by \cite{FKLO16}. 

The paper is organised as follows.
In Section~2, we define the basic structures on the cone $\K(X)$ and discuss probability measures on the cone. In particular, in Proposition~\ref{infinbdd},  we describe a class of measures for which \eqref{infsup} holds almost everywhere. In Section~3, we discuss harmonic analysis on the cone. For this, we consider an analogue of the $K$-transform (Definition~\ref{Definition1}), cf.\ \cite{harmana}, which maps functions defined on the sub-cone $\K_0(X)$ of discrete measures with finite support to the cone $\K(X)$, and study properties of the $K$-transform. Next, we discuss the correlation measures and correlation functions of probability measures on the cone. Finally, in Section~4, we study discrete gradients on the cone, derive several properties of the corresponding Dirichlet forms (Propositions~\ref{propZh1} and \ref{propZh2}), and describe properties of certain polynomial functions on the cone which satisfy, in particular, the binomial formula (Proposition~\ref{prop:binom:cone}).

\medskip

\hrule

\medskip

This article was written with the enthusiasm, commitment and deep scientific knowledge of our collaborator, co-author
and friend Yuri Kondratiev. He passed away on September 5\textsuperscript{th}, 2023.

\section{Framework}
Let $\M(X)$  denote the space of real-valued Radon measures on $X$. We equip $\M(X)$ with the vague topology,
that is the coarsest topology on $\M(X)$ 
such that, for each continuous function $f:X\to\R$ with compact support (shortly $f\in C_\mathrm{c}(X)$), the following
mapping is continuous:
\[
\M(X)\ni\nu\mapsto\langle\nu,f\rangle:=\int_Xf(x)\,d\nu(x)\in\R.
\]
We consider the corresponding Borel $\sigma$-algebra $\B(\M(X))$. Let $\K(X)\subset \M(X)$ be defined by \eqref{cone_def}.
We endow $\K(X)$ with the vague topology induced by $\M(X)$ and
denote the corresponding Borel $\sigma$-algebra by $\mathcal{B}(\K(X))$; that is then the trace $\sigma$-algebra of $\B(\M(X))$.  Note that, for each $f\in C_\mathrm{c}(X)$, 
\begin{equation*}
  \langle\eta,f\rangle=\sum_{x\in\tau(\eta)}s_xf(x), \qquad \eta\in\K(X). 
\end{equation*}

Let $\R^*_+:=(0,+\infty)$ be endowed with the logarithmic metric
$$d_{\R^*_+}(s_1,s_2):=\left|\ln(s_1)-\ln(s_2)\right|,\quad s_1,s_2>0.$$ It is easy to see that $\R^*_+$ is a locally compact Polish space, and any set of the form $\left[a,b\right]$, with $0<a<b<\infty$, is compact. Then, $\X:=\R^*_+\times X$ is also a locally compact Polish space. Hence, we can define $\M(\X)$ and $\B_\mathrm{c}(\X)$ as before. Clearly, any compact set in $\X$ is a subset of $[a,b]\times\Lambda$ for some $0<a<b<\infty$ and $\Lambda\in\B_\mathrm{c}(X)$; henceforth, $\mathcal{B}_\mathrm{c}(X)$ denotes the family of sets in $\mathcal{B}(X)$ with compact closure.

We define the space of \emph{locally finite configurations} over $\X$ as the set
\begin{equation*}
\Gamma(\X):=\Bigl\{\gamma\subset \X \colon \bigl\lvert \gamma\cap ([a,b]\times\Lambda)\bigr\rvert<\infty\ \  \forall\ \Lambda\in\B_\mathrm{c}(X), b>a>0\Bigr\}.
\end{equation*}
As usual, each configuration $\gamma\in\Gamma(\X)$ can be identified with the Radon measure
$\sum_{y\in\gamma}\delta_y\in\M(\X)$. Thus, the inclusion $\Gamma(\X)\subset\M(\X)$ holds, which allows to endow $\Gamma(\X)$
with the vague topology induced by $\M(\X)$ and the corresponding Borel $\sigma$-algebra $\mathcal{B}(\Gamma(\X))$.

Let $\Gamma_p(\X)\subset\Gamma(\X)$ be the set of \emph{pinpointing configurations}, which consists of 
all configurations $\gamma$ such that if $(s_1,x_1), (s_2,x_2)\in\gamma$ with $x_1=x_2$, then $s_1=s_2$. For each 
$\gamma\in\Gamma_p(\X)$ and each $\Lambda\in\mathcal{B}_\mathrm{c}(X)$, we define the local mass of $\Lambda$ by
\[
\gamma(\Lambda):=\int_{\X} s\i_\Lambda(x)\,d\gamma(s,x)=
\sum_{(s,x)\in\gamma}s\i_\Lambda(x)\in\left[0,+\infty\right],
\]
where $\i_\Lambda$ denotes the indicator function of the set $\Lambda$. The 
set of pinpointing configurations with finite local mass is then defined by
\[
\Pi(\X):=\{\gamma\in\Gamma_p(\X): \gamma(\Lambda)<\infty\ \forall\Lambda\in\mathcal{B}_\mathrm{c}(X)\}.
\]
One has $\Pi(\X)\in\mathcal{B}(\Gamma(\X))$ and we fix the trace $\sigma$-algebra
of $\mathcal{B}(\Gamma(\X))$ on $\Pi(\X)$, denoted by $\mathcal{B}(\Pi(\X))$.

Thus, we have defined a bijective mapping
\begin{equation}
  \begin{array}
    [c]{rl}
    \mathcal{R}:\Pi(\X)&\longrightarrow\K(X)\\
    \gamma=\displaystyle\sum_{(s,x)\in\gamma}\delta_{(s,x)}&\longmapsto\displaystyle\sum_{(s,x)\in\gamma}s\delta_x.
    \end{array}\label{def:cR}
\end{equation}
As shown in \cite{MR3041709}, both $\mathcal{R}$ and its inverse mapping $\mathcal{R}^{-1}$ are
measurable with respect to $\mathcal{B}(\Pi(\X))$ and $\mathcal{B}(\K(X))$ and $\mathcal{R}$ is a
$\sigma$-isomorphism. Moreover,  
\[
\mathcal{B}(\K(X))=\{\mathcal{R}(A): A\in\B(\Pi(\X))\},
\]
and thus $\mathcal{B}(\Pi(\X))=\{\mathcal{R}^{-1}(B):B\in\mathcal{B}(\K(X))\}$.

Let $\pi_{\nu\otimes\sigma}$ be the \emph{Poisson measure} on $\mathcal{B}(\Gamma(\X))$  (see e.g. \cite{AKR1998a} for the details) with intensity measure
$\nu\otimes\sigma$, where $\nu$ and $\sigma$ are two non-atomic positive Radon measures on the Borel
$\sigma$-algebras $\mathcal{B}(\R^*_+)$ and $\mathcal{B}(X)$, respectively, and $\nu$ has finite first moment, i.e.,
\begin{equation}
  \int_{\R^*_+}s\,d\nu(s)<\infty.\label{fmoment}
\end{equation}
In terms of the Laplace transform of $\pi_{\nu\otimes\sigma}$, for each continuous function
$f:\X\to[0,\infty)$ with compact support, we have 
\[
\int_{\Gamma(\X)} e^{-\langle\gamma,f\rangle}d\pi_{\nu\otimes\sigma}(\gamma)=\exp\left(\int_{\X}\left(e^{-f(s,x)}-1\right)d(\nu\otimes\sigma)(s,x)\right).
\]

\begin{proposition}\label{support}
For $\nu$ and $\sigma$ as above,  $\pi_{\nu\otimes\sigma}(\Pi(\X))=1$.
\end{proposition}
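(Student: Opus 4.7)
The plan is to verify the two defining conditions of $\Pi(\X)$ separately: the pinpointing property $\gamma\in\Gamma_p(\X)$, and finiteness of the local mass $\gamma(\Lambda)<\infty$ for every $\Lambda\in\mathcal{B}_\mathrm{c}(X)$. Both will be obtained from standard Campbell formulas for the Poisson process, after noting that $\gamma\in\Gamma(\X)$ already $\pi_{\nu\otimes\sigma}$-a.s. because $\nu\otimes\sigma$ is locally finite on $\X$ (any compact set in $\X$ sits inside some $[a,b]\times\Lambda$, and $\nu([a,b])<\infty$ by the logarithmic-metric compactness of $[a,b]$, $\sigma(\Lambda)<\infty$ because $\sigma$ is Radon).

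For the pinpointing property, I would fix a compact box $K=[a,b]\times\Lambda$ with $0<a<b$ and $\Lambda\in\mathcal{B}_\mathrm{c}(X)$ and apply the second-order Campbell/Mecke formula to bound the expected number of ordered pairs of distinct points of $\gamma\cap K$ whose $x$-coordinates coincide:
\[
\mathbb{E}\Big[\sum_{\substack{(s_1,x_1),(s_2,x_2)\in\gamma\cap K\\(s_1,x_1)\ne(s_2,x_2)}}\i_{\{x_1=x_2\}}\Big]=\nu([a,b])^2\int_\Lambda\int_\Lambda\i_{\{x_1=x_2\}}\,d\sigma(x_1)\,d\sigma(x_2)=0,
\]
because $\nu([a,b]),\sigma(\Lambda)<\infty$ and, by non-atomicity of $\sigma$, $\int_\Lambda\sigma(\{x\})\,d\sigma(x)=0$. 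Hence a.s.\ no two distinct points of $\gamma\cap K$ share their $x$-coordinate. Taking a countable exhaustion of $\X$ by such boxes $K_n\uparrow\X$ gives $\gamma\in\Gamma_p(\X)$ $\pi_{\nu\otimes\sigma}$-a.s.

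For finiteness of the local mass, the first-order Campbell formula gives, for each $\Lambda\in\mathcal{B}_\mathrm{c}(X)$,
\[
\mathbb{E}_{\pi_{\nu\otimes\sigma}}[\gamma(\Lambda)]=\int_{\X} s\,\i_\Lambda(x)\,d(\nu\otimes\sigma)(s,x)=\Big(\int_{\R^*_+}s\,d\nu(s)\Big)\sigma(\Lambda)<\infty
\]
by the first-moment assumption \eqref{fmoment} and $\sigma(\Lambda)<\infty$. Thus $\gamma(\Lambda)<\infty$ a.s.\ for each fixed $\Lambda$. Since $X$ is locally compact Polish, hence $\sigma$-compact, one can fix a countable exhaustion $\Lambda_n\uparrow X$ with $\Lambda_n\in\mathcal{B}_\mathrm{c}(X)$; by monotonicity $\gamma(\Lambda)\le\gamma(\Lambda_n)$ whenever $\Lambda\subset\Lambda_n$, so on a single set of full $\pi_{\nu\otimes\sigma}$-measure one has $\gamma(\Lambda)<\infty$ for every $\Lambda\in\mathcal{B}_\mathrm{c}(X)$.

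The argument is short and the only technical care needed is to localize before invoking Campbell's formula, since neither $\nu$ nor $\nu\otimes\sigma$ need be finite globally; this is the step where the logarithmic metric on $\R^*_+$ matters, guaranteeing that sets of the form $[a,b]$ with $a>0$ are compact and therefore have finite $\nu$-mass. The upgrade from "for each $\Lambda$" to "for all $\Lambda$ simultaneously" is then routine via $\sigma$-compactness and monotonicity, so I do not foresee a genuine obstacle beyond properly setting up these localizations.
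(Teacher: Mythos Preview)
Your proof is correct and follows essentially the same route as the paper's: both arguments split into (i) verifying the pinpointing property on compact boxes $[a,b]\times\Lambda$ using the non-atomicity of $\sigma$, and (ii) showing finiteness of the local mass via the first-order Mecke/Campbell identity applied to $s\,\i_\Lambda(x)$. The only cosmetic differences are that you make the second-order Campbell computation for pinpointing explicit (the paper instead cites Kingman for the distribution of $\gamma\cap([a,b]\times\Lambda)$) and you spell out the $\sigma$-compactness step needed to pass from ``for each $\Lambda$'' to ``for all $\Lambda$ simultaneously,'' which the paper leaves implicit.
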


\begin{proof}
As easily seen, $\Gamma_p(\X),\Pi(\X)\in\mathcal{B}(\Gamma(\X))$. Moreover, using the distribution of
configurations of the form $\gamma\cap(\left[a,b\right]\times\Lambda)$, $\gamma\in\Gamma(\X)$, $0<a<b$,
$\Lambda\in\mathcal{B}_\mathrm{c}(X)$, under $\pi_{\nu\otimes\sigma}$ (see e.g.~\cite{K93}), we conclude that, for each
$0<a<b$, $\Lambda\in\mathcal{B}_\mathrm{c}(X)$ fixed,
\begin{align*}
&\pi_{\nu\otimes\sigma}\!\left(\!\{\gamma\in\Gamma(\X): \exists\,(s_1,x_1),(s_2,x_2)\in\gamma\cap(\left[a,b\right]\times\Lambda)\,\mathrm{s.t.~} x_1=x_2, s_1\not=s_2\}\!\right)\\
&=0.
\end{align*}
Thus $\pi_{\nu\otimes\sigma}(\Gamma_p(\X))=1$. The rest of the proof is a consequence of the Mecke identity
\cite{Mec67}, which states that for every measurable function
$H:\Gamma(\X)\times \X\to[0,+\infty)$ the following equality holds:
\begin{align}
&\int_{\Gamma(\X)}\int_{\X}H(\gamma,s,x)\,d\gamma(s,x)d\pi_{\nu\otimes\sigma}(\gamma)\label{MeckeId}\\
&=\int_{\Gamma(\X)}\int_{\X}H(\gamma\cup\{(s,x)\},s,x)\,d(\nu\otimes\sigma)(s,x)d\pi_{\nu\otimes\sigma}(\gamma)\nonumber.
\end{align}
In particular, for each $\Lambda\in\mathcal{B}_\mathrm{c}(X)$ fixed and for the measurable function
$H(\gamma,s,x)=s\i_\Lambda(x)$, the latter leads to
\begin{align}
\int_{\Gamma_p(\X)}\gamma(\Lambda)d\pi_{\nu\otimes\sigma}(\gamma)
&=\int_{\Gamma_p(\X)}\int_{\X}s\i_\Lambda(x)\,d(\nu\otimes\sigma)(s,x)d\pi_{\nu\otimes\sigma}(\gamma)\notag\\
&=\sigma(\Lambda)\int_{\R^*_+}s\,d\nu(s)<\infty,\label{finite1stmoment}
\end{align}
which implies that $\gamma(\Lambda)<\infty$ for $\pi_{\nu\otimes\sigma}$-a.a.~$\gamma\in\Gamma_p(\X)$.
Hence, $\pi_{\nu\otimes\sigma}(\Pi(\X))=1$.
\end{proof}

\begin{definition}
  By Proposition~\ref{support}, we may view $\pi_{\nu\otimes\sigma}$ as a probability
measure on $\mathcal{B}(\Pi(\X))$. Thus, we consider the push-forward of the measure $\pi_{\nu\otimes\sigma}$
under $\mathcal{R}$ to $\K(X)$, which we denote by $\pi_{\K,\nu\otimes\sigma}$.
\end{definition}

Then, by \eqref{finite1stmoment},
\begin{equation*}\label{1stlocmom}
  \int_{\K(X)} \eta(\Lambda) d\pi_{\K,\nu\otimes\sigma}(\eta)<\infty.
\end{equation*}

Note that, for each continuous bounded function $g\in C_\mathrm{b}(\R^N)$, $N\in\N$, and for every functions $\varphi_1,\dotsc,\varphi_N\in C_\mathrm{c}(X)$, we have
\begin{align*}
\int_{\K(X)} &g\left(\langle\eta,\varphi_1\rangle,\dotsc,\langle\eta,\varphi_N\rangle\right)d\pi_{\K,\nu\otimes\sigma}(\eta)\\
&=\int_{\Pi(\X)}g\left(\langle\gamma,\mathrm{id}\otimes\varphi_1\rangle,\dotsc,\langle\gamma,\mathrm{id}\otimes\varphi_N\rangle\right)d\pi_{\nu\otimes\sigma}(\gamma),
\end{align*}
where $(\mathrm{id}\otimes\varphi)(s,x):=s\varphi(x)$. In terms of the Laplace transform of
$\pi_{\K,\nu\otimes\sigma}$, for each $f\in C_\mathrm{c}(X)$ one finds
\[
\int_{\K(X)} e^{-\langle\eta,f\rangle}d\pi_{\K,\nu\otimes\sigma}(\eta)=\exp\left(\int_X\int_{\R^*_+}\left(e^{-sf(x)}-1\right)d\nu(s)d\sigma(x)\right).
\]

\begin{proposition}[Mecke-type identity]
For each measurable function $F:\K(X)\times\X\to[0,+\infty)$, the following equality holds
\begin{multline}\label{yfuftyttfyoi9u}
\int_{\K(X)}\int_XF(\eta,s_x,x)\,d\eta(x)d\pi_{\K,\nu\otimes\sigma}(\eta)
\\=\int_{\K(X)}\int_X\int_{\R^*_+}s\,F(\eta+s\delta_x,s,x)\,d\nu(s)d\sigma(x)d\pi_{\K,\nu\otimes\sigma}(\eta).
\end{multline}
\end{proposition}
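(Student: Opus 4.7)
The plan is to transport the identity to $\Pi(\widehat{X})$ via the $\sigma$-isomorphism $\mathcal{R}$, apply the classical Mecke identity \eqref{MeckeId}, and transport back.

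First I would rewrite the left-hand side. Since $\eta=\mathcal{R}(\gamma)=\sum_{(s,x)\in\gamma}s\delta_x$ with $\gamma\in\Pi(\widehat{X})$, the integral against $d\eta(x)$ becomes a sum over the atoms of $\gamma$, weighted by the mass: for any measurable $F\geq 0$,
\[
\int_X F(\eta,s_x,x)\,d\eta(x)=\sum_{x\in\tau(\eta)}s_xF(\eta,s_x,x)=\sum_{(s,x)\in\gamma}sF(\mathcal{R}(\gamma),s,x)=\int_{\widehat{X}}sF(\mathcal{R}(\gamma),s,x)\,d\gamma(s,x).
\]
Using that $\pi_{\K,\nu\otimes\sigma}$ is the push-forward of $\pi_{\nu\otimes\sigma}$ under $\mathcal{R}$, the LHS of \eqref{yfuftyttfyoi9u} is
\[
\int_{\Gamma(\widehat{X})}\int_{\widehat{X}}sF(\mathcal{R}(\gamma),s,x)\,d\gamma(s,x)\,d\pi_{\nu\otimes\sigma}(\gamma).
\]

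Next I would apply the Mecke identity \eqref{MeckeId} with $H(\gamma,s,x):=sF(\mathcal{R}(\gamma),s,x)$, obtaining
\[
\int_{\Gamma(\widehat{X})}\int_{\widehat{X}}sF(\mathcal{R}(\gamma\cup\{(s,x)\}),s,x)\,d(\nu\otimes\sigma)(s,x)\,d\pi_{\nu\otimes\sigma}(\gamma).
\]
Then the key identification is $\mathcal{R}(\gamma\cup\{(s,x)\})=\mathcal{R}(\gamma)+s\delta_x$, valid whenever $x\notin\tau(\mathcal{R}(\gamma))$. Since $\sigma$ is non-atomic and the (projection to $X$ of the) restriction of $\gamma$ to any $[a,b]\times\Lambda$ is a.s.\ a finite (in particular $\sigma$-null) set under $\pi_{\nu\otimes\sigma}$, Fubini shows the set $\{(\gamma,s,x):x\in\tau(\mathcal{R}(\gamma))\}$ has $\pi_{\nu\otimes\sigma}\otimes(\nu\otimes\sigma)$-measure zero, so this substitution is justified almost everywhere. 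Pushing back through $\mathcal{R}$ and unfolding $\nu\otimes\sigma$ as $d\nu(s)d\sigma(x)$ yields the RHS of \eqref{yfuftyttfyoi9u}.

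The main obstacle is the almost-everywhere identification $\mathcal{R}(\gamma\cup\{(s,x)\})=\mathcal{R}(\gamma)+s\delta_x$: one must rule out that the inserted point $(s,x)$ collides in its $X$-coordinate with an existing atom of $\gamma$, which would make $\gamma\cup\{(s,x)\}\notin\Pi(\widehat{X})$ and introduce an atom of weight $s+s_x$ rather than $s$ in the cone picture. This is dispensed with by the non-atomicity of $\sigma$ exactly as in the proof of Proposition~\ref{support}. By the standard monotone class/linearity argument, it suffices to verify \eqref{yfuftyttfyoi9u} for $F\geq 0$ measurable, so no integrability is lost in the manipulations.
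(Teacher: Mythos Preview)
Your proposal is correct and follows essentially the same approach as the paper's proof: transport the integral to $\Pi(\widehat{X})$ via $\mathcal{R}$, apply the classical Mecke identity \eqref{MeckeId} with $H(\gamma,s,x)=sF(\mathcal{R}(\gamma),s,x)$, and transport back using $\mathcal{R}(\gamma\cup\{(s,x)\})=\mathcal{R}(\gamma)+s\delta_x$. You are in fact more explicit than the paper about justifying this last identification almost everywhere via the non-atomicity of $\sigma$; the paper simply invokes Proposition~\ref{support} and writes out the chain of equalities directly.
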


\begin{proof} 
Formula \eqref{yfuftyttfyoi9u} is a direct consequence of the Mecke
identity \eqref{MeckeId} and Proposition~\ref{support}:
\begin{align*}
&\int_{\K(X)}\int_XF(\eta,s_x,x)\,d\eta(x)d\pi_{\K,\nu\otimes\sigma}(\eta)\\
&=\int_{\Pi(\X)}\int_{\X}sF(\mathcal{R}(\gamma),s,x)\,d\gamma(s,x)d\pi_{\nu\otimes\sigma}(\gamma)\\
&=\int_{\Pi(\X)}\int_{\X}sF(\mathcal{R}(\gamma\cup\{(s,x)\}),s,x)\,d(\nu\otimes\sigma)(s,x)d\pi_{\nu\otimes\sigma}(\gamma)\\  
&=\int_{\K(X)}\int_X\int_{\R^*_+}sF(\eta+s\delta_x,s,x)\,d\nu(s)d\sigma(x)d\pi_{\K,\nu\otimes\sigma}(\eta).\qedhere
\end{align*}
\end{proof}

A special case concerns $\nu=\nu_\theta$, where
\[
\nu_\theta(ds):=\frac{\theta}{s}e^{-s}ds
\]
for some $\theta>0$. In this case, the corresponding Poisson measure is called a Gamma--Poisson measure and the
push-forward to $\K(X)$ is called a \emph{Gamma measure} with intensity $\theta$ and denoted by
$\mathcal{G}_\theta$. Alternatively, a Gamma measure can be characterized by its Laplace transform \cite{CCR}: for
each $-1<f\in C_\mathrm{c}(X)$,
\[
\int_{\K(X)} e^{-\langle\eta,f\rangle}d\mathcal{G}_\theta(\eta)=\exp\left(-\theta\int_X\ln(1+f(x))\, d\sigma(x)\right).
\]

The next result states a Mecke-type characterization result for Gamma measures. There, $\M_+(X)\subset\M(X)$
denotes the cone of all non-negative measures.

\begin{proposition}[\cite{MR3041709}] 
Let $\mu$ be a probability measure defined on $\M_+(X)$ which has finite first local moments, i.e., for each
$\Lambda\in\mathcal{B}_\mathrm{c}(X)$,
\[
\int_{\M_+(X)}\eta(\Lambda)\,d\mu(\eta)<\infty.
\]
Then, $\mu=\mathcal{G}_\theta$ if and only if for any measurable function
$G:\M_+(X)\times X\to[0,+\infty)$ we have
\begin{multline*}
\int_{\M_+(X)}\int_XG(\eta,x)\,d\eta(x)d\mu(\eta)\\=\int_{\M_+(X)}\int_X\int_{\R^*_+}sG(\eta+s\delta_x,x)\,d\nu_\theta(s)d\sigma(x)d\mu(\eta).
\end{multline*}
\end{proposition}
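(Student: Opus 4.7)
The ``only if'' direction follows directly from identity \eqref{yfuftyttfyoi9u} applied to the measure $\mathcal{G}_\theta = \pi_{\K,\nu_\theta\otimes\sigma}$ with the $s$-independent integrand $F(\eta,s,x) := G(\eta,x)$. For the converse, the plan is to show that the Laplace transform of $\mu$ on non-negative functions in $C_\mathrm{c}(X)$ coincides with that of $\mathcal{G}_\theta$, which by the standard uniqueness result for probability measures on $\M_+(X)$ forces $\mu = \mathcal{G}_\theta$.

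Fix $f \in C_\mathrm{c}(X)$ with $f \geq 0$ and set $\psi(t) := \int_{\M_+(X)} e^{-t\langle\eta,f\rangle}\, d\mu(\eta)$ for $t \geq 0$. Applying the assumed identity to $G(\eta,x) := f(x) e^{-t\langle\eta,f\rangle}$, performing the inner $s$-integral $\int_0^\infty \theta\, e^{-s(1+tf(x))}\, ds = \theta/(1+tf(x))$, and using non-negativity of the integrand to justify Tonelli, one obtains
\[
\int_{\M_+(X)} \langle\eta,f\rangle\, e^{-t\langle\eta,f\rangle}\, d\mu(\eta) \;=\; \theta\, \psi(t)\int_X\frac{f(x)}{1+tf(x)}\, d\sigma(x).
\]
Differentiating $\psi$ under the $\mu$-integral (see the remark below), the left-hand side equals $-\psi'(t)$, hence
\[
-\frac{\psi'(t)}{\psi(t)} \;=\; \theta\int_X\frac{f(x)}{1+tf(x)}\, d\sigma(x) \;=\; \frac{d}{dt}\!\left(\theta\int_X\ln(1+tf(x))\, d\sigma(x)\right).
\]
Since $\psi(0)=1$, integrating from $0$ to $1$ gives $\psi(1) = \exp\bigl(-\theta\int_X\ln(1+f(x))\, d\sigma(x)\bigr) = L_{\mathcal{G}_\theta}(f)$, yielding the desired equality of Laplace transforms.

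The main technical obstacle is justifying differentiability of $\psi$ together with the interchange of derivative and integral. On any interval $[t_0, \infty)$ with $t_0>0$, the uniform bound $\langle\eta,f\rangle\, e^{-t\langle\eta,f\rangle} \leq 1/(e t_0)$ dominates the difference quotients and allows the interchange. Extending the argument down to $t = 0$ requires $\mu$-integrability of $\langle\eta,f\rangle$; this is obtained by applying the assumed identity with the trivial choice $G(\eta,x) := f(x)$, whose right-hand side reduces, via $\int_0^\infty e^{-s}\, ds = 1$, to $\theta \int_X f(x)\, d\sigma(x)$. Thus $\int_{\M_+(X)} \langle\eta,f\rangle\, d\mu(\eta) = \theta\int_X f(x)\, d\sigma(x) < \infty$, which both guarantees continuity of $\psi$ at $0$ and lets one take the limit $t\to 0^+$ in the integrated ODE, closing the argument.
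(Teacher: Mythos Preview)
The paper does not supply its own proof of this proposition; it is quoted from \cite{MR3041709} and left without argument. There is therefore nothing in the paper to compare your proof against directly.

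Your argument is correct. The ``only if'' direction is immediate from \eqref{yfuftyttfyoi9u} as you say. For the converse, the Laplace-transform ODE method you employ is a standard and efficient route: the computation of the $s$-integral, the identification of the left-hand side with $-\psi'(t)$, and the resulting first-order ODE for $\psi$ are all accurate. Two minor remarks. First, continuity of $\psi$ at $t=0$ already follows from dominated convergence with the trivial bound $e^{-t\langle\eta,f\rangle}\le 1$, so the separate first-moment computation (correct, and in any case guaranteed by the hypothesis on $\mu$) is not strictly needed for that step. Second, you should state explicitly that $\psi(t)>0$ for all $t\ge 0$, which is clear since the integrand is strictly positive and $\mu$ is a probability measure; this justifies dividing by $\psi$. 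The appeal to uniqueness of a probability measure on $\M_+(X)$ from its Laplace functional on non-negative $f\in C_\mathrm{c}(X)$ is indeed standard (see e.g.\ \cite{Kal2017}).
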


Measure $\nu_\theta$ is an example of an infinite measure on $\R^*_+$. The next statement shows that, for any such $\nu$ (with finite first moment), the support $\tau(\eta)$ of $\pi_{\K,\nu\otimes\sigma}$-a.a.~$\eta\in\K(X)$ is not a locally finite subset of $X$.
\begin{proposition}\label{infinbdd}
  Let $\nu(\R^*_+)=\infty$. Then, for any $\Lambda\in\B_\mathrm{c}(X)$ with $\sigma(\Lambda)>0$ and for $\pi_{\K,\nu\otimes\sigma}$-a.a.~$\eta\in\K(X)$, the set $\tau(\eta)\cap\Lambda$ has an infinite number of elements. 
\end{proposition}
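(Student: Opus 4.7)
The plan is to transfer the statement through the isomorphism $\mathcal{R}$ to an assertion about the underlying Poisson point process $\pi_{\nu\otimes\sigma}$ on $\X$, and then exploit that the slab $\R^*_+\times\Lambda$ has infinite $\nu\otimes\sigma$-measure.

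First I would set $\gamma:=\mathcal{R}^{-1}(\eta)\in\Pi(\X)$ and note that, because $\gamma$ is pinpointing, distinct atoms of $\gamma$ have distinct $X$-coordinates; hence the projection onto $X$ of $\gamma\cap(\R^*_+\times\Lambda)$ is exactly $\tau(\eta)\cap\Lambda$ and the two have equal cardinality. Since $\pi_{\K,\nu\otimes\sigma}$ is the push-forward of $\pi_{\nu\otimes\sigma}$ under $\mathcal{R}$, it suffices to show that $|\gamma\cap(\R^*_+\times\Lambda)|=\infty$ for $\pi_{\nu\otimes\sigma}$-a.a.\ $\gamma$.

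Next I would partition $\R^*_+$ into countably many sets of finite $\nu$-measure whose measures still sum to infinity. A convenient choice is $B_0:=[1,\infty)$ together with $B_n:=[\tfrac{1}{n+1},\tfrac{1}{n})$ for $n\geq 1$. Each $B_n$ with $n\geq 1$ is contained in a compact subset of $(\R^*_+,d_{\R^*_+})$, so $\nu(B_n)<\infty$ as $\nu$ is Radon; and the finite first moment assumption \eqref{fmoment} gives $\nu(B_0)\leq\int_1^\infty s\,d\nu(s)<\infty$. Since $\nu(\R^*_+)=\infty$, we have $\sum_{n\geq 0}\nu(B_n)=\infty$.

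Finally, the sets $B_n\times\Lambda\subset\X$ are pairwise disjoint and of finite $\nu\otimes\sigma$-measure, so under $\pi_{\nu\otimes\sigma}$ the counts $N_n:=|\gamma\cap(B_n\times\Lambda)|$ are independent Poisson random variables with parameters $\lambda_n:=\nu(B_n)\sigma(\Lambda)$ satisfying $\sum_n\lambda_n=\sigma(\Lambda)\nu(\R^*_+)=\infty$; moreover $|\gamma\cap(\R^*_+\times\Lambda)|=\sum_nN_n$. A second Borel--Cantelli argument applied to the independent events $\{N_n\geq 1\}$, whose probabilities $1-e^{-\lambda_n}$ sum to infinity, then yields $\sum_nN_n=\infty$ almost surely. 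The only non-automatic step is the elementary implication ``$\sum\lambda_n=\infty\Rightarrow\sum(1-e^{-\lambda_n})=\infty$'', which I expect to be the main (but minor) obstacle; it is handled by splitting the indices according to whether $\lambda_n\leq 1$ or $\lambda_n>1$ and using $1-e^{-\lambda}\geq(1-e^{-1})\min(\lambda,1)$. Once this is in place, the remainder of the argument is simply bookkeeping through $\mathcal{R}$ and the Poisson property.
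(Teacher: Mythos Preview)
Your proof is correct. Both you and the paper transfer the question through $\mathcal{R}$ to the Poisson process $\pi_{\nu\otimes\sigma}$ on $\X$ and exploit that $(\nu\otimes\sigma)(\R^*_+\times\Lambda)=\infty$, but the technical execution differs. The paper's argument is shorter: it records the explicit Poisson probability
\[
\pi_{\nu\otimes\sigma}\bigl\{\,|\gamma\cap([a,b]\times\Lambda)|=n\,\bigr\}
=\frac{(\sigma(\Lambda)\nu([a,b]))^n}{n!}\,e^{-\sigma(\Lambda)\nu([a,b])},
\]
and lets $a\downarrow 0$, $b\uparrow\infty$, so that $\nu([a,b])\to\infty$ and the right-hand side tends to $0$ for every fixed $n$; monotonicity then gives $\pi_{\K,\nu\otimes\sigma}(\{|\tau(\eta)\cap\Lambda|=n\})=0$ for each $n$. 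Your route via a countable partition $\{B_n\}$, independence of the Poisson counts $N_n$, and the second Borel--Cantelli lemma is equally valid but slightly heavier: it requires the auxiliary inequality $1-e^{-\lambda}\geq(1-e^{-1})\min(\lambda,1)$ and an explicit check that each $\nu(B_n)<\infty$ (your use of the first-moment bound for $B_0=[1,\infty)$ is correct and necessary, since this set is not compact in the logarithmic metric). The paper's approach buys brevity; yours is a self-contained probabilistic argument that does not rely on the closed-form Poisson distribution at all, only on independence over disjoint sets, which would generalize more readily.
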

\begin{proof}
By e.g. \cite{AKR1998a}, for any $\Lambda\in\B_\mathrm{c}(X)$, $0<a<b$, $n\in\N_0:=\N\cup\{0\}$,
\begin{equation}\label{npointset}
  \pi_{\nu\otimes\sigma}\Bigl\{\gamma\in\Gamma(\X) \colon \bigl\lvert \gamma\cap([a,b]\times \Lambda)\bigr\rvert = n\Bigr\}=\frac{\bigl(\sigma(\Lambda)\nu([a,b])\bigr)^n}{n!}e^{-\sigma(\Lambda)\nu([a,b])}.
\end{equation}
Therefore, if $\sigma(\Lambda)>0$ and $\nu(\R^*_+)=\infty$ then, passing $a\to0$, $b\to\infty$, we obtain, by the definition of $\pi_{\K,\nu\otimes\sigma}$, that
\[
  \pi_{\K,\nu\otimes\sigma}(\{\eta\in\K(X): |\tau(\eta)\cap\Lambda|=n\})=0,
\]
that implies the statement.
\end{proof}

\begin{remark}
  Note that by \eqref{npointset}, for any $\Lambda\in\B_\mathrm{c}(X)$ with $\sigma(\Lambda)=0$,
  \[
    \pi_{\K,\nu\otimes\sigma}(\{\eta\in\K(X): \tau(\eta)\cap\Lambda=\emptyset\})=1.
  \]
\end{remark}

\begin{definition}
Let  $\mathcal{M}^1_{\mathrm{fm}}(\K(X))$ denote the space of all   
probability measures $\mu$ on $(\K(X),\B(\K(X))$ which have finite local moments of all orders, that is, 
for all $n\in\N$ and all $\Lambda\in\B_\mathrm{c}(X)$,
\begin{equation*}
  \int_{\K(X)}\left(\eta(\Lambda)\right)^n\,d\mu(\eta)<\infty.
\end{equation*}
\end{definition}

\begin{proposition}\label{Proposition5}
$\pi_{\K,\nu\otimes\sigma}\in\mathcal{M}^1_{\mathrm{fm}}(\K(X))$ if and only if $\nu$ has finite moments of all orders, that
is, for all $n\in\N$,
\begin{equation}
  \int_{\R^*_+}s^n\,d\nu(s)<\infty.\label{allmoments}
\end{equation}
\end{proposition}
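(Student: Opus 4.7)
The plan is to derive a single recursive identity for the local moments of $\pi_{\K,\nu\otimes\sigma}$ from the Mecke-type identity \eqref{yfuftyttfyoi9u} and then to induct on the order $n$ in both directions. Fix $\Lambda\in\B_\mathrm{c}(X)$ and set $m_j:=\int_{\R^*_+}s^j\,d\nu(s)\in[0,\infty]$ and $M_n:=\int_{\K(X)}\eta(\Lambda)^n\,d\pi_{\K,\nu\otimes\sigma}(\eta)\in[0,\infty]$. Plugging the non-negative test function $F(\eta,s,x):=\i_\Lambda(x)\,\eta(\Lambda)^{n-1}$ (which does not depend on $s$) into \eqref{yfuftyttfyoi9u}, the left-hand side evaluates to $M_n$ because $\int_X\i_\Lambda(x)\,d\eta(x)=\eta(\Lambda)$; on the right-hand side, the factor $\i_\Lambda(x)$ confines $x$ to $\Lambda$, so $(\eta+s\delta_x)(\Lambda)=\eta(\Lambda)+s$, and a binomial expansion of $(\eta(\Lambda)+s)^{n-1}$ together with Fubini yields, in $[0,+\infty]$,
\[
M_n\;=\;\sigma(\Lambda)\sum_{k=0}^{n-1}\binom{n-1}{k}\,m_{n-k}\,M_k,\qquad n\in\N.
\]

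For the \emph{if} direction, assume \eqref{allmoments} for every $n\in\N$. Induct on $n$: $M_0=1$, and if $M_k<\infty$ for every $k<n$, then every summand on the right is a product of three finite quantities, hence $M_n<\infty$. Since $\Lambda\in\B_\mathrm{c}(X)$ was arbitrary, $\pi_{\K,\nu\otimes\sigma}\in\mathcal{M}^1_{\mathrm{fm}}(\K(X))$.

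For the \emph{only if} direction, assume $\pi_{\K,\nu\otimes\sigma}\in\mathcal{M}^1_{\mathrm{fm}}(\K(X))$ and pick a single $\Lambda\in\B_\mathrm{c}(X)$ with $\sigma(\Lambda)>0$ (the degenerate case $\sigma\equiv 0$ forces $\pi_{\K,\nu\otimes\sigma}=\delta_0$, which makes the implication vacuous). Induct on $n$, isolating the $k=0$ summand of the recursion:
\[
\sigma(\Lambda)\,m_n\;=\;M_n\;-\;\sigma(\Lambda)\sum_{k=1}^{n-1}\binom{n-1}{k}\,m_{n-k}\,M_k.
\]
By assumption each $M_n$ and $M_k$ is finite, and by the inductive hypothesis each $m_{n-k}$ for $1\le k\le n-1$ (i.e.\ each $m_j$ for $1\le j\le n-1$) is finite; hence the right-hand side is finite, and dividing by $\sigma(\Lambda)>0$ yields $m_n<\infty$.

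The main subtlety to watch is that the recursion must be valid \emph{before} any finiteness of the $M_n$'s or the $m_j$'s is known, so that both inductions can start at $n=1$ without an integrability hypothesis. This is precisely what the $[0,+\infty]$-valued Mecke-type identity \eqref{yfuftyttfyoi9u}, applied to a non-negative $F$, provides; once this is in place the two inductions are routine.
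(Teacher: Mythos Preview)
Your proof is correct and takes a genuinely different route from the paper's. The paper transfers the moment integral to $\Gamma(\X)$ and then invokes an explicit closed-form identity from \cite[Theorem~6.1]{FKLO21}, namely
\[
\int_{\K(X)}\eta(\Lambda)^n\,d\pi_{\K,\nu\otimes\sigma}(\eta)
=\sum_{k=1}^n\frac{(\sigma(\Lambda))^k}{k!}\sum_{\substack{(i_1,\dots,i_k)\in\N^k\\ i_1+\dots+i_k=n}}\binom{n}{i_1\dots i_k}\prod_{j=1}^k m_{i_j},
\]
from which the equivalence is read off directly. You instead stay inside the paper and derive the recursion $M_n=\sigma(\Lambda)\sum_{k=0}^{n-1}\binom{n-1}{k}m_{n-k}M_k$ from the Mecke-type identity~\eqref{yfuftyttfyoi9u} alone, then induct. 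Your argument is more self-contained (no external citation), while the paper's formula is sharper in that it gives the moments explicitly. Two minor remarks: (i) in the ``only if'' direction the induction is not actually needed---since all summands in the recursion are non-negative and $M_n<\infty$, the $k=0$ term $\sigma(\Lambda)\,m_n\,M_0=\sigma(\Lambda)\,m_n$ is already finite; (ii) your parenthetical on $\sigma\equiv 0$ is slightly off: that case does not make the implication vacuous but rather makes the ``only if'' direction fail outright, a defect of the proposition's statement that the paper's proof shares.
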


\begin{proof}
For each $n\in\N$ and each $\Lambda\in\B_\mathrm{c}(X)$ we have
\begin{align*}
\int_{\K(X)}\left(\eta(\Lambda)\right)^n\,d\pi_{\K,\nu\otimes\sigma}(\eta)
&=\int_{\Pi(\X)}\left(\gamma(\Lambda)\right)^n\,d\pi_{\nu\otimes\sigma}(\gamma)\\
&=\int_{\Gamma(\X)}\langle\gamma,\mathrm{id}\otimes\i_\Lambda\rangle^n\,d\pi_{\nu\otimes\sigma}(\gamma),
\end{align*}
where, as shown e.g. in \cite[Theorem 6.1]{FKLO21}, the latter integral is equal to
\[
\sum_{k=1}^n\frac{(\sigma(\Lambda))^k}{k!}\sum_{\substack{(i_1,\ldots,i_k)\in\N^k\\i_1+\ldots+i_k=n}}\binom{n}{i_1\ldots i_k}
\prod_{j=1}^k\left(\int_{\R^*_+}s^{i_j}\,d\nu(s)\right).
\]
The required necessary and sufficient condition then follows.
\end{proof}

\begin{remark}
By Proposition~\ref{Proposition5}, $\mathcal{G}_\theta\in\mathcal{M}^1_{\mathrm{fm}}(\K(X))$ for every intensity $\theta>0$.
\end{remark}

  Let $\Lambda\in\B_\mathrm{c}(X)$ and $0<a<b$ be fixed. 
  We consider 
  \[
    \Gamma([a,b]\times\Lambda):=\{\gamma\in\Gamma(\X) \colon \gamma\subset [a,b]\times\Lambda\} 
  \] 
  and let $\B_{[a,b]\times\Lambda}(\Gamma(\X))$ be the corresponding $\sigma$-algebra on $\Gamma(\X)$ which is $\sigma$-isomorphic to the trace $\sigma$-algebra 
  $\B(\Gamma([a,b]\times\Lambda))$, 
  see \cite[Section 2.2]{harmana} for details. Let $\Pi([a,b]\times\Lambda)=\Gamma_p([a,b]\times\Lambda)$ be the corresponding measurable subset of pinpointing configurations (which are finite and, hence, have finite local mass), and let $\B_{[a,b]\times\Lambda}(\Pi(\X))$ be the corresponding $\sigma$-algebra on $\Pi(\X)$. Then
  \begin{align*}
    \K([a,b]\times\Lambda)&:=\mathcal{R}(\Pi([a,b]\times\Lambda))\\&=\{\eta\in\K(X)\colon \tau(\eta)\subset\Lambda,\ s_x(\eta)\in[a,b]\ \forall x\in\tau(\eta)\}\in \B(\K(X)),
  \end{align*}
  and we can consider the corresponding $\sigma$-algebra $\B_{[a,b]\times\Lambda}(\K(X))$. 
  We also consider the measurable mapping
  $\hat{p}_{[a,b]\times\Lambda}:\Pi(\X)\to \Pi([a,b]\times\Lambda)$ given by 
  $\hat{p}_{[a,b]\times\Lambda}(\gamma):=\gamma\cap([a,b]\times\Lambda)$. We then define the measurable mapping $p_{\Lambda,a,b}:\K(X)\to \K([a,b]\times\Lambda)$ by
  \begin{equation*}
      p_{\Lambda,a,b}(\eta) := \mathcal{R} \,\hat{p}_{[a,b]\times\Lambda}\, \mathcal{R}^{-1}\eta, \quad\eta\in\K(X).
  \end{equation*}
  Then, for each $\eta\in\K(X)$,
    \[
       p_{\Lambda,a,b}(\eta) =\sum_{x\in\tau(\eta)\cap\Lambda}\i_{[a,b]}(s_x)s_x\delta_x,
    \]
  and
  \begin{equation}
    \lvert \tau(  p_{\Lambda,a,b}(\eta) ) \rvert  = \sum_{x\in\tau(\eta)\cap\Lambda}\i_{[a,b]}(s_x)\leq
  \frac{1}{a}\sum_{x\in\tau(\eta)\cap\Lambda}s_x = \frac{1}{a}\eta(\Lambda)<\infty.\label{eq00}
  \end{equation}

\begin{definition}
  A measure $\mu\in\mathcal{M}^1_{\mathrm{fm}}(\K(X))$ is called locally absolutely continuous with respect to $\pi_{\K,\nu\otimes\sigma}$ if, for all $\Lambda\in\B_\mathrm{c}(X)$ and $0<a<b$, the measure $\mu^{\Lambda,a,b}:=\mu\circ   p_{\Lambda,a,b}^{-1}$ is absolutely continuous with respect to  
  $\pi_{\K,\nu\otimes\sigma}^{\Lambda,a,b}:=\pi_{\K,\nu\otimes\sigma}\circ   p_{\Lambda,a,b}^{-1}$. Equivalently, the push-forward of the measure $\mu$ under $\mathcal{R}^{-1}$ to $\Pi(\X)$ is locally absolutely continuous with respect to $\pi_{\nu\otimes\sigma}$ (see \cite{harmana} for details).
\end{definition}

\section{Harmonic analysis on the cone}

\subsection{Discrete Radon measures with finite support}

We consider the sub-cone of all non-negative Radon measures with finite support
\[
\K_0(X):=\left\{\eta\in\K(X): |\tau(\eta)|<\infty\right\}=\bigsqcup_{n=0}^\infty\K_0^{(n)}(X),
\]
where $\K_0^{(0)}(X):=\{0\}$ is the set consisting of the zero measure and, for each $n\in\N$,
$\K_0^{(n)}(X):=\{\eta\in\K_0(X): |\tau(\eta)|=n\}$. 

We consider also the space $\Pi_0(\X):=\{\gamma\in\Pi(\X): |\gamma|<\infty\}$ of pinpointing finite configurations
on $\X$. Since $\Pi_0(\X)\subset\Gamma_0(\X):=\{\gamma\in\Gamma(\X): |\gamma|<\infty\}$, we can endow $\Pi_0(\X)$
with the topology induced by the topology defined on $\Gamma_0(\X)$ \cite{harmana} and the corresponding
Borel $\sigma$-algebra, denoted by $\B(\Pi_0(\X))$. As easily seen, $\Pi_0(\X)\in\mathcal{B}(\Gamma_0(\X))$, thus, $\B(\Pi_0(\X))=\mathcal{B}(\Gamma_0(\X))\cap \Pi_0(\X)$. Also, $\K_0(X)=\mathcal{R}(\Pi_0(\X))$, which allows to endow $\K_0(X)$ with the $\sigma$-algebra
\[
\B(\K_0(X)):=\{\mathcal{R}(A): A\in\B(\Pi_0(\X))\}
\]
with respect to which the restriction $\mathcal{R}_{|\Pi_0(\X)}:\Pi_0(\X)\to\K_0(X)$ is a $\sigma$-isomorphism.

A set $A\in \B(\K_0(X))$ is called \emph{bounded} if there exist $0<a<b$, $\Lambda\in\B_\mathrm{c}(X)$ and an $N\in\N$
such that, for all $\eta\in A$,
\begin{equation}
  \tau(\eta)\subset\Lambda,\qquad |\tau(\eta)|\leq N,\qquad s_x\in\left[a,b\right] \ \ \forall\,x\in\tau(\eta).\label{Eq4}
\end{equation}
We denote by $\B_\mathrm{b}(\K_0(X))$ the set of all bounded sets in $\B(\K_0(X))$.

Let $L^0(\K_0(X))$ be the class of all $\B(\K_0(X))$-measurable functions $G:\K_0(X)\to\R$. Assume that $G=\i_A G$, where a $A\in\B(\K_0(X))$ is such that, for some $0<a<b$ and $\Lambda\in\B_\mathrm{c}(X)$, we have, for all for all $\eta\in A$,
\begin{equation}
  \tau(\eta)\subset\Lambda, \qquad s_x\in\left[a,b\right] \ \ \forall\,x\in\tau(\eta).\label{Eq41}
\end{equation}
 Then we will say that $G$ has a \emph{local support} on $\K_0(X)$. The class of all measurable functions with local support is denoted by $L^0_{\mathrm{ls}}(\K_0(X))$. Similarly, if $G=\i_A G$ for a bounded $A\in\B_\mathrm{b}(\K_0(X))$, we will say that $G$ has \emph{bounded support} on $\K_0(X)$. We consider a subclass of all \emph{bounded} measurable functions with bounded support,  denoted by $B_{\mathrm{bs}}(\K_0(X))$. Thus, $B_{\mathrm{bs}}(\K_0(X))\subset L^0_{\mathrm{ls}}(\K_0(X))\subset L^0(\K_0(X))$. We will denote the pre-images of these classes under $(\mathcal{R}_{|\Pi_0(\X)})^{-1}$ by $B_{\mathrm{bs}}(\Pi_0(\X))$, $L^0_{\mathrm{ls}}(\Pi_0(\X))$, $L^0(\Pi_0(\X))$, respectively.

\subsection{$K$-transform}

In the sequel, for $\eta$, $\xi\in\K(X)$, we write $\xi\subset\eta$ if $\tau(\xi)\subset\tau(\eta)$ and
$s_x(\xi)=s_x(\eta)$ for all $x\in\tau(\xi)$. If, additionally, $\xi\in\K_0(X)$, we write $\xi\Subset\eta$. Note that, for $\xi,\eta\in\K(X)\subset\M(X)$ with $\xi\subset\eta$, the measure $\eta-\xi\in\M(X)$ is well-defined and $\eta-\xi\in\K(X)$ with $\tau(\eta-\xi)=\tau(\eta)\setminus\tau(\xi)$ and $s_x(\eta-\xi)=s_x(\eta)$ for each $x\in \tau(\eta)\setminus\tau(\xi)$.

\begin{definition}\label{Definition1}
  For each $G\in L^0_{\mathrm{ls}}(\K_0(X))$, we define a function $KG:\K(X)\to\R$, called the $K$-transform of $G$, by  
  \begin{equation}
  (KG)(\eta):=\sum_{\xi\Subset\eta}G(\xi),\quad\eta\in\K(X).\label{Eq3}
  \end{equation}
  \end{definition}
Note that since $G\in L^0_{\mathrm{ls}}(\K_0(X))$, there exist $0<a<b$ and $\Lambda\in\B_\mathrm{c}(X)$ such that
\[
  \sum_{\xi\Subset\eta}G(\xi)=\sum_{\xi\Subset\eta}\i_{\substack{\tau(\xi)\subset\Lambda\\s_x\in[a,b], x\in\tau(\xi)}}(\xi) G(\xi) = \sum_{\xi\subset  p_{\Lambda,a,b}(\eta) } G(\xi),
\]
and hence, by \eqref{eq00}, the sum in \eqref{Eq3} is well-defined (note that, by \eqref{eq00}, $  p_{\Lambda,a,b}(\eta) \in\K_0(X)$). Moreover, we have shown that, for $F:=KG$, $G\in L^0_{\mathrm{ls}}(\K_0(X))$, 
\begin{equation}
F(\eta)=F(  p_{\Lambda,a,b}(\eta) ),\quad\eta\in\K(X), \label{5Eq1}
\end{equation}
for some $0<a<b$ and $\Lambda\in\B_\mathrm{c}(X)$ (dependent on $F$).

\begin{proposition}\label{KPi}
Let $G\in L^0_{\mathrm{ls}}(\K_0(X))$ and $KG$ be defined by \eqref{Eq3}. The $KG$ is a $\B(\K(X))$-measurable function. 
\end{proposition}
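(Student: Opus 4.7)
The plan is to reduce the measurability of $KG$ on $\K(X)$ to the already-known measurability of the classical $K$-transform on the configuration space $\Gamma_0(\X)$, using the $\sigma$-isomorphism $\mathcal{R}$.

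First, by \eqref{5Eq1}, there exist $0<a<b$ and $\Lambda\in\B_\mathrm{c}(X)$ such that $KG=(KG)\circ p_{\Lambda,a,b}$. Since $p_{\Lambda,a,b}:\K(X)\to\K([a,b]\times\Lambda)$ is measurable and $\K([a,b]\times\Lambda)\subset\K_0(X)$, it is enough to prove that the restriction of $KG$ to $\K_0(X)$ is $\B(\K_0(X))$-measurable. Because $\mathcal{R}_{|\Pi_0(\X)}:\Pi_0(\X)\to\K_0(X)$ is a $\sigma$-isomorphism, this is in turn equivalent to showing that the pullback $\tilde F:=KG\circ \mathcal{R}_{|\Pi_0(\X)}$ is $\B(\Pi_0(\X))$-measurable.

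Next, I would unfold the definition at the configuration level. For $\eta\in\K_0(X)$ with $\gamma=\mathcal{R}^{-1}\eta\in\Pi_0(\X)$, the map $\mathcal{R}$ restricts to a bijection between finite sub-configurations $\delta\subset\gamma$ (automatically pinpointing, since $\gamma$ is) and measures $\xi\Subset\eta$. Hence, with $\tilde G:=G\circ\mathcal{R}_{|\Pi_0(\X)}\in L^0_{\mathrm{ls}}(\Pi_0(\X))$, one has
\begin{equation*}
\tilde F(\gamma)=\sum_{\xi\Subset \mathcal{R}\gamma}G(\xi)=\sum_{\delta\subset\gamma}\tilde G(\delta),
\end{equation*}
which is exactly the classical $K$-transform of $\tilde G$ (extended by zero outside $\Pi_0(\X)$) evaluated on the finite configuration $\gamma$. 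The measurability of such a $K$-transform on $\Gamma_0(\X)$ is established in \cite{harmana}, and restricting that statement to the measurable subset $\Pi_0(\X)\in\B(\Gamma_0(\X))$ yields the $\B(\Pi_0(\X))$-measurability of $\tilde F$. Transporting back via $\mathcal{R}$ and composing with the measurable map $p_{\Lambda,a,b}$ completes the proof.

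The only delicate point is the bookkeeping needed to quote the result of \cite{harmana}: one must check that extending $\tilde G$ by $0$ off $\Pi_0(\X)$ preserves the local-support condition on $\Gamma_0(\X)$, and that for any $\gamma\in\Gamma_0(\X)$ only finitely many $\delta\subset\gamma$ contribute to the sum (which is automatic from $\lvert\gamma\rvert<\infty$). Given the framework already set up in Section~2 and the results recalled from \cite{harmana}, everything else is routine.
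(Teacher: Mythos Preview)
Your proposal is correct and follows essentially the same route as the paper: pull back via the $\sigma$-isomorphism $\mathcal{R}$ to the configuration space, identify $KG\circ\mathcal{R}$ with the classical $K$-transform of $\tilde G=G\circ\mathcal{R}_{|\Pi_0(\X)}$ (extended by zero off $\Pi_0(\X)$), and invoke the measurability result from \cite{harmana}. The only cosmetic difference is that you first localize through $p_{\Lambda,a,b}$ and then only need the $K$-transform on $\Gamma_0(\X)$, whereas the paper skips that step and works directly with the $K$-transform on all of $\Gamma(\X)$, restricting afterwards to $\Pi(\X)$.
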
  

\begin{proof}
 Let $F\in L^0_{\mathrm{ls}}(\Pi_0(\X))$. We define the $\mathcal{B}(\Gamma_0(\X))$-measurable function $\tilde F:\Gamma_0(\X)\to\R$ given by
 \[
  \tilde F(\eta):=\begin{cases}
  F(\eta),&\quad \text{if } \eta\in\Pi_0(\X),\\[1mm]
  0, & \quad \text{otherwise}. 
  \end{cases}
\]
By \cite{harmana}, the function $K_{\X}\tilde F:\Gamma(\X)\to\R$, given by
\[
  (K_{\X}\tilde F)(\gamma):=\sum_{\substack{\zeta\subset\gamma\\ |\zeta|<\infty}}\tilde F(\zeta),\quad\gamma\in\Gamma(\X),
\]
is well-defined and $\B(\Gamma(\X))$-measurable. Therefore, the restriction 
\[
  (K_\Pi F)(\gamma):=\Bigl((K_{\X}\tilde F)\bigr\rvert_{\Pi(\X)}\Bigr)(\gamma)
  = \sum_{\substack{\zeta\subset\gamma\\ |\zeta|<\infty}}F(\zeta),\quad\gamma\in\Pi(\X)
\]
is $\B(\Pi(\X))$-measurable.

Now, let $G\in L^0_{\mathrm{ls}}(\K_0(X))$ be given and consider $F = G\circ\mathcal{R}\bigr\rvert_{\Pi_0(\X)}\in L^0_{\mathrm{ls}}(\Pi_0(\X))$.
Note that
\begin{equation}
KG=\sum_{n=0}^\infty K(G\i_{\K^{(n)}_0(X)}).\label{Eq2}
\end{equation}
For an arbitrary set of indices $I\subseteq\N_0$, let $\eta=\sum_{i\in I}s_i\delta_{x_i}\in\K(X)$. 
Hence, by \eqref{Eq2},
\begin{align}\label{eqd2qwr2413}
(KG)(\eta)&=\sum_{n=0}^\infty\sum_{\{i_1,\ldots,i_n\}\subseteq I}G\left(\sum_{i=1}^ns_{i_k}\delta_{x_{i_k}}\right)\\&
=\sum_{n=0}^\infty\sum_{\{i_1,\ldots,i_n\}\subseteq I}G\left(\mathcal{R}\left(\sum_{i=1}^n\delta_{(s_{i_k}, x_{i_k})}\right)\right)\notag\\
&=K_\Pi(G\circ\mathcal{R}_{|\Pi_0(\X)})(\mathcal{R}^{-1}\eta)=
(K_\Pi F)(\mathcal{R}^{-1}\eta)\notag
\end{align}
with $\mathcal{R}^{-1}\eta=\sum_{i\in I}\delta_{(s_i,x_i)}\in\Pi(\X)$. 

Since $K_\Pi F$ is $\B(\Pi(\X))$-measurable, we have that $KG$ is $\B(\K(X))$-meas\-ur\-able.
\end{proof}

Let $\mathcal{F}L^0(\K(X))$ be the class of all $\B(\K(X))$-measurable functions $F:\K(X)\to\R$ such that \eqref{5Eq1} holds (for some $0<a<b$ and $\Lambda\in\B_\mathrm{c}(X)$ dependent on $F$). By~\eqref{5Eq1} and Proposition~\ref{KPi}, $K:L^0_{\mathrm{ls}}(\K_0(X))\to \mathcal{F}L^0(\K(X))$.

\begin{proposition}
  
1.~The mapping $K:L^0_{\mathrm{ls}}(\K_0(X))\to \mathcal{F}L^0(\K(X))$ is linear, positivity-preserving and invertible, with the inverse mapping given by, for $F\in \mathcal{F}L^0(\K(X))$,
  \begin{equation}
    (K^{-1}F)(\eta)=\sum_{\xi\subset\eta}(-1)^{|\tau(\eta)|-|\tau(\xi)|}F(\xi),\qquad \eta\in\K_0(X).\label{inverse}
  \end{equation}
2.~For each $G\in B_{\mathrm{bs}}(\K_0(X))$, there exist $C>0$, $\Lambda\in\B_\mathrm{c}(X)$ and $N\in\N$ such that
\begin{equation*}
  \left|(KG)(\eta)\right|\leq C (1+\eta(\Lambda))^N,\quad\eta\in\K(X). 
\end{equation*}
\end{proposition}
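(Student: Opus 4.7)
Linearity of $K$ is immediate from \eqref{Eq3}, and positivity preservation is obvious. For the inverse formula I proceed in two stages.

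The first stage is to check that the right-hand side of \eqref{inverse} defines an element of $L^0_{\mathrm{ls}}(\K_0(X))$. Given $F\in\mathcal{F}L^0(\K(X))$ with parameters $(a,b,\Lambda)$ as in \eqref{5Eq1}, I claim that $(K^{-1}F)(\eta)=0$ whenever $\eta\in\K_0(X)$ carries an atom $s_x\delta_x$ with $x\notin\Lambda$ or $s_x\notin[a,b]$. Writing $\eta=\eta'+s_x\delta_x$, the sum \eqref{inverse} splits into two parts according to whether $x\in\tau(\xi)$; the map $\xi'\mapsto\xi'+s_x\delta_x$ is a sign-reversing bijection between them, and because the distinguished atom is deleted by $p_{\Lambda,a,b}$, one has $F(\xi'+s_x\delta_x)=F(p_{\Lambda,a,b}(\xi'+s_x\delta_x))=F(\xi')$, so the two partial sums cancel. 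Consequently $K^{-1}F$ is supported on $\{\eta\in\K_0(X):\tau(\eta)\subset\Lambda,\ s_x(\eta)\in[a,b]\ \forall x\in\tau(\eta)\}$, yielding $K^{-1}F\in L^0_{\mathrm{ls}}(\K_0(X))$; measurability is automatic since on each layer $\K_0^{(n)}(X)$ the formula reduces to a finite sum of measurable terms.

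The second stage verifies the two Möbius identities. For $G\in L^0_{\mathrm{ls}}(\K_0(X))$ and $\eta\in\K_0(X)$, substituting \eqref{Eq3} into \eqref{inverse} and exchanging summation order gives
\[
(K^{-1}KG)(\eta)=\sum_{\zeta\subset\eta}G(\zeta)\sum_{\zeta\subset\xi\subset\eta}(-1)^{|\tau(\eta)|-|\tau(\xi)|},
\]
where the inner sum equals $(1-1)^{|\tau(\eta)|-|\tau(\zeta)|}$ and vanishes unless $\zeta=\eta$, leaving $G(\eta)$. For $F\in\mathcal{F}L^0(\K(X))$ and $\eta\in\K(X)$, the support property of stage one reduces $\sum_{\xi\Subset\eta}(K^{-1}F)(\xi)$ to the finite sum over $\xi\subset p_{\Lambda,a,b}(\eta)$, and the analogous combinatorial identity on the finite lattice of submeasures of $p_{\Lambda,a,b}(\eta)$, combined with \eqref{5Eq1}, yields $F(p_{\Lambda,a,b}(\eta))=F(\eta)$.

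\textbf{Part 2.} Fix $M>0$, $0<a<b$, $\Lambda\in\B_\mathrm{c}(X)$ and $N\in\N$ such that $|G|\le M$ and the bounded support of $G$ satisfies \eqref{Eq4}. The non-zero terms of $(KG)(\eta)=\sum_{\xi\Subset\eta}G(\xi)$ correspond to those $\xi$ with $\tau(\xi)\subset\tau(\eta)\cap\Lambda$, $|\tau(\xi)|\le N$, and $s_x(\xi)\in[a,b]$ for all $x\in\tau(\xi)$. Set $M_\eta:=|\{x\in\tau(\eta)\cap\Lambda:s_x\in[a,b]\}|$; the estimate of \eqref{eq00} gives $M_\eta\le\eta(\Lambda)/a$. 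The number of admissible $\xi$'s is then at most $\sum_{n=0}^N\binom{M_\eta}{n}\le(N+1)(1+M_\eta)^N$, so taking $C:=M(N+1)\max(1,a^{-1})^N$ produces the desired bound $|(KG)(\eta)|\le C(1+\eta(\Lambda))^N$.

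The main delicate point is the support argument in stage one of Part~1: without it, $K^{-1}F$ need not lie in $L^0_{\mathrm{ls}}(\K_0(X))$, and the identity $K\circ K^{-1}=\mathrm{id}$ could not even be formulated, since the sum \eqref{Eq3} must terminate to be well-defined. The two Möbius computations and the combinatorial counting in Part~2 are then routine.
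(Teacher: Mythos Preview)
Your proof is correct and follows essentially the same approach as the paper's. The only presentational difference is in establishing that $K^{-1}F\in L^0_{\mathrm{ls}}(\K_0(X))$: you pick one ``bad'' atom and cancel via a sign-reversing involution, whereas the paper decomposes every $\xi\subset\eta$ as $\xi_1+\xi_2$ with $\xi_1$ inside and $\xi_2$ outside the window $[a,b]\times\Lambda$, then sums over $\xi_2$ to obtain $0^{|\text{bad part}|}$; both are standard renderings of the same M\"obius cancellation. In Part~2 your constant $C=M(N+1)\max(1,a^{-1})^N$ differs from the paper's $c\,(\max\{1,1/a\})^N$ only by the harmless factor $N+1$, since in fact $\sum_{k=0}^N\binom{M_\eta}{k}\le(1+M_\eta)^N$ already holds without it.
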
  
  
  \begin{proof}
  1.~The linearity and positivity-preserving properties follow directly from the definition \eqref{Eq3}. Next, for $G\in L^0_{\mathrm{ls}}(\K_0(X))$, we set $F=KG$ and denote the right-hand side of \eqref{inverse} by $K^{-1}{F}$. Then
  \begin{align*}
    (K^{-1}F)(\eta)&=\sum_{\xi\subset\eta}(-1)^{|\tau(\eta)|-|\tau(\xi)|}\sum_{\zeta\subset\xi} G(\zeta)=
    \sum_{\zeta\subset\eta} G(\zeta)\sum_{\substack{\xi\subset\eta:\\\zeta\subset\xi}}(-1)^{|\tau(\eta)|-|\tau(\xi)|}\\
    &=\sum_{\zeta\subset\eta} G(\zeta)\sum_{\xi\subset\eta-\zeta}(-1)^{|\tau(\eta-\zeta)|-|\tau(\xi)|}=\sum_{\zeta\subset\eta} G(\zeta)0^{|\tau(\eta-\zeta)|}=G(\eta).
  \end{align*}
  On the other hand, let $F\in \mathcal{F}L^0(\K(X))$ and let $\Lambda\in\B_\mathrm{c}(X)$ and $0<a<b$ be such that \eqref{5Eq1} holds.
  We have 
  \begin{align*}
    &\quad (K^{-1}F)(\eta)=\sum_{\xi\subset\eta}(-1)^{|\tau(\eta)|-|\tau(\xi)|}F(\xi)\\&=\sum_{\substack{\xi_1\subset\eta:\\
    \tau(\xi_1)\subset \Lambda,\\
    s_x\in[a,b] \ \forall x\in  \tau(\xi_1)}}
    \sum_{\substack{\xi_2\subset\eta:\\
    \exists x\in  \tau(\xi_2):\,  x\in X\setminus \Lambda \text{ or } s_x\notin[a,b]}}
    (-1)^{|\tau(\eta)|-|\tau(\xi_1)|-|\tau(\xi_2)|}F(\xi_1+\xi_2)
    \\&=\sum_{\substack{\xi_1\subset\eta:\\
    \tau(\xi_1)\subset \Lambda,\\
    s_x\in[a,b] \ \forall x\in  \tau(\xi_1)}}F(\xi_1)(-1)^{|\tau(\eta)|-|\tau(\xi_1)|}
    \sum_{\substack{\xi_2\subset\eta:\\
    \exists x\in  \tau(\xi_2):\,  x\in X\setminus \Lambda \text{ or } s_x\notin[a,b]}}
    (-1)^{-|\tau(\xi_2)|};
  \end{align*}
  and since
  \[
    \sum_{\substack{\xi_2\subset\eta:\\
    \exists x\in  \tau(\xi_2):\,  x\in X\setminus \Lambda \text{ or } s_x\notin[a,b]}}
    (-1)^{-|\tau(\xi_2)|} = 0^{\lvert \{x\in\tau(\eta)\colon x\in X\setminus\Lambda \text{ or } s_x\notin[a,b]\}\rvert},
  \]
  we obtain
  \[
    (K^{-1}F)(\eta) =\i_{\tau(\eta)\subset\Lambda,\, s_x\in[a,b] \ \forall x\in\tau(\eta)} (\eta) (K^{-1}F)(\eta),
  \]
  thus $K^{-1}F\in L^0_{\mathrm{ls}}(\K_0(X))$.  We then have 
  \begin{align*}
    (KK^{-1}F)(\eta)&= \sum_{\xi\Subset\eta} \i_{\tau(\xi)\subset\Lambda,\, s_x\in[a,b] \ \forall x\in\tau(\xi)} (\xi) \sum_{\zeta\subset\xi}(-1)^{|\tau(\xi)|-|\tau(\zeta)|}F(\zeta)\\
    &= \sum_{\xi\subset   p_{\Lambda,a,b}(\eta) }  
     \sum_{\zeta\subset\xi}(-1)^{|\tau(\xi)|-|\tau(\zeta)|}F(\zeta)\\
    &= \sum_{\zeta\subset  p_{\Lambda,a,b}(\eta) } F(\zeta) \sum_{\substack{\xi\subset   p_{\Lambda,a,b}(\eta) :\\\zeta\subset\xi}}  
    (-1)^{|\tau(\xi)|-|\tau(\zeta)|}\\
    &= \sum_{\zeta\subset  p_{\Lambda,a,b}(\eta) } F(\zeta) 0^{|\tau(  p_{\Lambda,a,b}(\eta) )|-|\tau(\zeta)|}= F(  p_{\Lambda,a,b}(\eta) )=F(\eta),
  \end{align*}
  by \eqref{5Eq1}.
  
2.~For $G\in B_{\mathrm{bs}}(\K_0(X))$, we have $|G|\leq c\i_A$ for some $c>0$, $A\in\B_\mathrm{b}(\K_0(X))$. Hence, for some $0<a<b$, $\Lambda\in\B_\mathrm{c}(X)$, $N\in\N$, \eqref{Eq4}  holds for all $\eta\in A$. Therefore, for each
  $\eta\in\K(X)$, 
  \begin{align}
    |(KG)(\eta)|&\leq c(K\i_A)(\eta)=c\sum_{k=0}^N\sum_{\substack{\xi\subset  p_{\Lambda,a,b}(\eta) \\|\tau(\xi)|=k}}\i_A(\xi)\notag \\&\leq c\sum_{k=0}^N\binom{|\tau(  p_{\Lambda,a,b}(\eta) )|}{k}\leq C(1+\eta(\Lambda))^N,\label{polbddpd}
  \end{align}
  for $C=c (\max\bigl\{1,\frac{1}{a}\bigr\})^N$, where we used the estimate \eqref{eq00}.
  \end{proof}
  
We can also extend the $K$-transform to the class $\cF_{\exp}(\K_0(X))$ of measurable functions $G:\K_0(X)\to\R$ such that, for some $\Lambda\in\B_\mathrm{c}(X)$ and $C>0$,
\begin{equation}
  |G(\xi)|\leq \i_{\{\tau(\xi)\subset\Lambda\}}(\xi)\, C^{|\tau(\xi)|}\prod_{x\in\tau(\xi)}s_x,\qquad \xi\in\K_0(X).\label{Eqr8}
\end{equation}
Indeed, for each $\eta\in\K(X)$, we have then
\[
|(KG)(\eta)|\leq\sum_{\xi\Subset\eta}|G(\xi)|\leq\!
\sum_{\substack{\xi\Subset\eta:\\\tau(\xi)\subset\tau(\eta)\cap\Lambda}}\!C^{|\tau(\xi)|}\prod_{x\in\tau(\xi)}s_x
=\!\prod_{x\in\tau(\eta)\cap\Lambda}(1+Cs_x)<\infty,
\]
since $\eta\in\K(X)$ and hence, $\sum\limits_{x\in\tau(\eta)\cap\Lambda}Cs_x = C\eta(\Lambda)<\infty$.

\begin{example}
Let $f:X\to\R$ be a bounded measurable function with compact support.

\noindent
1.~For $G\in\A$ defined by
\[
G(\eta):=\begin{cases}
sf(x),&\text{ if } \eta=\{s\delta_x\}\in\K_0^{(1)}(X)\\
0,&\text{ otherwise}\end{cases},\quad\eta\in\K_0(X),
\]
the $K$-transform of $G$ is given by
\[
(KG)(\eta)=\sum_{x\in\tau(\eta)}s_xf(x)=\langle\eta,f\rangle,\quad\eta\in\K(X).
\]
2.~For the so-called Lebesgue--Poisson exponent $e_\K(f)\in\A$
corresponding to $f$,
\begin{equation}\label{LPexpdef}
  e_{\K}(f,\eta):=\prod_{x\in\tau(\eta)}s_xf(x),\quad \eta\in\K_0(X),
\end{equation}
its $K$-transform is equal to
\begin{equation}\label{Kcohst}
  (Ke_\K(f))(\eta)=\prod_{x\in\tau(\eta)}(1+s_xf(x)),\quad\eta\in\K(X).
\end{equation}
\end{example}

Given $G_1$, $G_2\in L^0(\K_0(X))$, let us define the $\star$-convolution between $G_1$ and $G_2$,
\[
(G_1\star G_2)(\eta):=\sum_{\substack{\xi_1+\xi_2+\xi_3=\eta\\\tau(\xi_i)\cap\tau(\xi_j)=\emptyset,\,i\not=j}}G_1(\xi_1+\xi_2)G_2(\xi_2+\xi_3),\quad\eta\in\K_0(X),
\]
where the sum is over all $\xi_1,\xi_2,\xi_3\subset\eta$ such that $(\tau(\xi_1), \tau(\xi_2), \tau(\xi_3))$ is a
partition of $\tau(\eta)$. As easily seen, under this product $L^0(\K_0(X))$ has a commutative
algebraic structure with unit element $e_\K(0)$. 

\begin{proposition}
For all $G_1,G_2\in L^0_{\mathrm{ls}}(\K_0(X))$ we have $G_1\star G_2\!\in\! L^0_{\mathrm{ls}}(\K_0(X))$ and
\begin{equation}
K(G_1\star G_2)=(KG_1)\cdot(KG_2).\label{fpc}
\end{equation}
\end{proposition}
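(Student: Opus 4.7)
The plan is to verify the two claims separately: first the stability of $L^0_{\mathrm{ls}}(\K_0(X))$ under $\star$, then the multiplicativity identity \eqref{fpc} by reindexing a double sum.

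For the first claim, I would use that $G_i \in L^0_{\mathrm{ls}}(\K_0(X))$ gives $\Lambda_i \in \B_\mathrm{c}(X)$ and $0 < a_i < b_i$ such that $G_i$ vanishes off the set $\{\xi : \tau(\xi) \subset \Lambda_i,\ s_x \in [a_i,b_i]\ \forall x \in \tau(\xi)\}$. Setting $\Lambda := \Lambda_1 \cup \Lambda_2$, $a := \min(a_1,a_2)$, $b := \max(b_1,b_2)$, any term $G_1(\xi_1+\xi_2)G_2(\xi_2+\xi_3)$ contributing to $(G_1 \star G_2)(\eta)$ forces $\tau(\xi_1+\xi_2) \subset \Lambda_1$ and $\tau(\xi_2+\xi_3) \subset \Lambda_2$, hence $\tau(\eta) \subset \Lambda$ and each atom weight lies in $[a,b]$. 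Measurability of $G_1 \star G_2$ is routine since the sum is finite on $\K_0(X)$.

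For the identity, I would expand the product directly and reorganise the double sum according to the ``overlap'' of the two subconfigurations. Explicitly,
\begin{equation*}
(KG_1)(\eta)\cdot (KG_2)(\eta) = \sum_{\xi_1 \Subset \eta}\sum_{\xi_2 \Subset \eta} G_1(\xi_1)\,G_2(\xi_2).
\end{equation*}
Given any pair $(\xi_1,\xi_2)$ with $\xi_i \Subset \eta$, there is a unique decomposition induced by the atom-supports: let $\zeta \Subset \eta$ be the measure carried by $\tau(\xi_1) \cap \tau(\xi_2)$, and let $\alpha := \xi_1 - \zeta$, $\beta := \xi_2 - \zeta$. Then $\alpha,\beta,\zeta$ have pairwise disjoint supports (and all weights coincide with those of $\eta$, so they indeed lie in $\K(X)$), and $\xi_1 = \alpha+\zeta$, $\xi_2 = \beta+\zeta$. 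Conversely, any $\omega \Subset \eta$ together with an ordered decomposition $\omega = \alpha+\beta+\zeta$ into pairwise support-disjoint pieces yields a unique such pair. This bijection allows me to rewrite the double sum as
\begin{equation*}
\sum_{\omega \Subset \eta}\ \sum_{\substack{\alpha+\beta+\zeta=\omega\\ \tau(\alpha),\tau(\beta),\tau(\zeta)\text{ disjoint}}} G_1(\alpha+\zeta)\,G_2(\beta+\zeta) = \sum_{\omega \Subset \eta} (G_1 \star G_2)(\omega) = \bigl(K(G_1 \star G_2)\bigr)(\eta),
\end{equation*}
which is exactly \eqref{fpc}.

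The only delicate point is justifying the rearrangement, but this is not really an obstacle: the local support already established, together with \eqref{eq00}, shows that only finitely many $\xi_1,\xi_2 \Subset \eta$ contribute (all are subconfigurations of $p_{\Lambda,a,b}(\eta) \in \K_0(X)$), so the double sum is finite and the reindexing is legitimate. Thus the proof reduces to verifying the ``overlap'' bijection, which is entirely combinatorial and analogous to the corresponding statement for the $K$-transform on the configuration space in \cite{harmana}.
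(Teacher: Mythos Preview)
Your proof is correct and follows essentially the same approach as the paper: both arguments establish local support by combining the data $(\Lambda_i,a_i,b_i)$ from the two functions into a common one, and both prove \eqref{fpc} via the same bijection between pairs $(\xi_1,\xi_2)$ of finite sub-measures of $\eta$ and partitions $(\alpha,\zeta,\beta)$ of their union according to the overlap $\tau(\xi_1)\cap\tau(\xi_2)$. Your treatment is slightly more explicit about taking a common $(\Lambda,a,b)$ and about the finiteness of the sums justifying the rearrangement, which the paper leaves implicit.
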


\begin{proof}
Given $G_1, G_2\in L^0_{\mathrm{ls}}(\K_0(X))$ we have $G_i=G_i\i_{A}$ for some $A\in\B(\K_0(X))$ such that \eqref{Eq41} holds. Then
\[
(G_1\star G_2)\i_A=((G_1\i_{A})\star(G_2\i_{A}))\i_A=((G_1\i_{A})\star(G_2\i_{A}))=G_1\star G_2.
\]
This shows that $G_1\star G_2\in L^0_{\mathrm{ls}}(\K_0(X))$. Concerning the right-hand side of \eqref{fpc}, 
\begin{equation}
(KG_1)(\eta)\cdot(KG_2)(\eta)=\left(\sum_{\xi\Subset\eta}G_1(\xi)\right)\left(\sum_{\zeta\Subset\eta}G_2(\zeta)\right),\ \eta\in\K(X),\label{Eq6}
\end{equation}
observe that, for each $\eta\in\K(X)$ fixed, there is a one-to-one correspondence between pairs $\xi\Subset\eta$,
$\zeta\Subset\eta$ and groups $\vartheta\Subset\eta$, $\xi_1,\xi_2,\xi_3\subset\vartheta$ with $\xi_1+\xi_2+\xi_3=\vartheta$, hence, 
$(\tau(\xi_1), \tau(\xi_2), \tau(\xi_3))$ forms a partition of $\tau(\vartheta)$. This one-to-one correspondence is
defined by the following rule: $\tau(\vartheta)=\tau(\xi)\cup\tau(\zeta)$,
$\tau(\xi_1)=\tau(\xi)\setminus\tau(\zeta)$, $\tau(\xi_2)=\tau(\xi)\cap\tau(\zeta)$,
$\tau(\xi_3)=\tau(\zeta)\setminus\tau(\xi)$. In this way, product \eqref{Eq6} can be rewritten as
\[
\sum_{\vartheta\Subset\eta}\sum_{\xi_1+\xi_2+\xi_3=\vartheta}G_1(\xi_1+\xi_2)G_2(\xi_2+\xi_3),
\]
which completes the proof.
\end{proof}

\subsection{Correlation measures and correlation functions on $\K_0(X)$}

A measure $\rho$ on $\bigl(\K_0(X),\B(\K_0(X))\bigr)$ is said to be \emph{locally finite} if $\rho(A)<\infty$ for each $A\in\B_\mathrm{b}(\K_0(X))$. 

\begin{example}
  An example of a locally finite measure is the Lebesgue--Poisson measure $\lambda_{\K_0,\nu\otimes\sigma}$, where $\nu$ and $\sigma$ are non-atomic positive Radon measures on $\mathcal{B}(\R^*_+)$
and $\mathcal{B}(X)$, respectively, and $\nu$ has a finite first moment \eqref{fmoment}. The measure $\lambda_{\K_0,\nu\otimes\sigma}$ is defined so that, for each $G\in B_{\mathrm{bs}}(\K_0(X))$,
\begin{align}
  &\int_{\K_0(X)}G(\eta)\,d\lambda_{\K_0,\nu\otimes\sigma}(\eta)\notag\\
  &=G(0)+\sum_{n=1}^\infty\frac{1}{n!}\int_{\X^n}G\left(\sum_{k=1}^ns_i\delta_{x_i}\right)d\nu(s_1)\dotsm d\nu(s_n)d\sigma(x_1)\dotsm d\sigma(x_n).\label{Eq8}
\end{align} 
Note that the sum in \eqref{Eq8} is finite for $G\in B_{\mathrm{bs}}(\K_0(X))$, in particular, for $G=\i_A$ with $A\in\B_\mathrm{b}(\K_0(X))$. Moreover, it is easily seen that $\lambda_{\K_0,\nu\otimes\sigma}$ is the push-forward of $\lambda_{\nu\otimes\sigma}\bigr\vert_{\mathcal{B}(\Pi_0(\X))}$ under $\mathcal{R}\bigr\vert_{|\Pi_0(\X)}$ to $\bigl(\K_0(X),\B(\K_0(X))\bigr)$, where $\lambda_{\nu\otimes\sigma}$ is the Lebesgue--Poisson on $\bigl(\Gamma_0(\X),\B(\Gamma_0(\X))\bigr)$, see \cite{harmana}. In particular, we can extend \eqref{Eq8} to measurable functions on $\K_0(X)$ integrable with respect to  $\lambda_{\K_0,\nu\otimes\sigma}(\eta)$ for which the right-hand side  of \eqref{Eq8} is well-defined. For example, if $f\in L^1(X,\sigma)$
and $e_\K(f)$ is defined as in \eqref{LPexpdef} for $\lambda_{\K_0,\nu\otimes\sigma}$-a.a.~$\eta\in\K_0(X)$, 
then
$e_\K(f)\in L^1(\K_0(X),\lambda_{\K_0,\nu\otimes\sigma})$ with
\[
\int_{\K_0(X)}e_\K(f,\eta)\,d\lambda_{\K_0,\nu\otimes\sigma}(\eta)
=\exp\left(\int_{\R^*_+}s\,d\nu(s)\int_Xf(x)\,d\sigma(x)\right).
\]
\end{example}

\begin{definition}\label{def:corm}
Let $\mu\in\mathcal{M}^1_{\mathrm{fm}}(\K(X))$ be given. We (uniquely) define a measure $\rho_\mu$ on on $(\K_0(X),\B(\K_0(X))$ by requiring that, for all $A\in\B_\mathrm{b}(\K_0(X))$,
\[
\rho_\mu(A)=\int_{\K(X)}(K\i_A)(\eta)\,d\mu(\eta).
\]
Then $\rho_\mu$  is called the correlation measure corresponding to $\mu$.  
\end{definition}

\begin{remark}
Note that, by \eqref{polbddpd}, the assumption $\mu\in\mathcal{M}^1_{\mathrm{fm}}(\K(X))$
ensures that $\rho_\mu(A)<\infty$ for all $A\in\B_\mathrm{b}(\K_0(X))$. 
\end{remark}

\begin{remark}
  It can be easily derived from \cite[Theorem 2 and formula (3.9)]{Len1973} that, under a very week assumption, the correlation measure $\rho_\mu$ in Definition~\ref{def:corm} uniquely determines the measure $\mu$.
\end{remark}

\begin{proposition}\label{lacormeasure}
  Let $\nu$ and $\sigma$ be two non-atomic positive Radon measures on $\mathcal{B}(\R^*_+)$
  and $\mathcal{B}(X)$, respectively. Assume that $\nu$ has finite moments of all orders \eqref{allmoments}, so that
  $\pi_{\K,\nu\otimes\sigma}\in\mathcal{M}^1_{\mathrm{fm}}(\K(X))$ by Proposition~\ref{Proposition5}. Then, $\lambda_{\K_0,\nu\otimes\sigma}$ is the correlation measure of $\pi_{\K,\nu\otimes\sigma}$.
\end{proposition}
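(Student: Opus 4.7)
My plan is to reduce the claim to the analogous (known) fact on the configuration space $\Gamma(\X)$, transferring everything along the $\sigma$-isomorphism $\mathcal{R}$ between $\Pi(\X)$ and $\K(X)$. The identity to prove is
\[
\int_{\K(X)}(K\i_A)(\eta)\,d\pi_{\K,\nu\otimes\sigma}(\eta)=\lambda_{\K_0,\nu\otimes\sigma}(A)
\qquad\text{for all }A\in\B_\mathrm{b}(\K_0(X)).
\]
Fix such an $A$ and set $F:=\i_A\circ\mathcal{R}\bigr\rvert_{\Pi_0(\X)}\in B_{\mathrm{bs}}(\Pi_0(\X))$, then extend $F$ by $0$ outside $\Pi_0(\X)$ to obtain a $\B(\Gamma_0(\X))$-measurable function $\tilde F$ on $\Gamma_0(\X)$. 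By boundedness of $A$, there exist $0<a<b$ and $\Lambda\in\B_\mathrm{c}(X)$ such that $\tilde F$ is supported in $\{\zeta\in\Gamma_0(\X):\zeta\subset[a,b]\times\Lambda\}$, so $\tilde F$ is a bounded function with bounded support on $\Gamma_0(\X)$ in the sense of \cite{harmana}.

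Now I proceed in three steps. First, by the chain of equalities established in the proof of Proposition~\ref{KPi} (see \eqref{eqd2qwr2413}) and by the push-forward definition of $\pi_{\K,\nu\otimes\sigma}$,
\[
\int_{\K(X)}(K\i_A)(\eta)\,d\pi_{\K,\nu\otimes\sigma}(\eta)
=\int_{\Pi(\X)}(K_\Pi F)(\gamma)\,d\pi_{\nu\otimes\sigma}(\gamma).
\]
Second, since $\pi_{\nu\otimes\sigma}(\Pi(\X))=1$ by Proposition~\ref{support} and $(K_\Pi F)(\gamma)=(K_\X\tilde F)(\gamma)$ for $\gamma\in\Pi(\X)$, this integral equals
\[
\int_{\Gamma(\X)}(K_\X\tilde F)(\gamma)\,d\pi_{\nu\otimes\sigma}(\gamma).
\]
Third, I invoke the classical fact from \cite{harmana} that the Lebesgue--Poisson measure $\lambda_{\nu\otimes\sigma}$ is the correlation measure of the Poisson measure $\pi_{\nu\otimes\sigma}$, i.e.\ for bounded support functions one has
\[
\int_{\Gamma(\X)}(K_\X\tilde F)(\gamma)\,d\pi_{\nu\otimes\sigma}(\gamma)
=\int_{\Gamma_0(\X)}\tilde F(\zeta)\,d\lambda_{\nu\otimes\sigma}(\zeta).
\]
Since $\tilde F$ is supported on $\Pi_0(\X)$ and agrees with $\i_A\circ\mathcal{R}$ there, and $\lambda_{\K_0,\nu\otimes\sigma}$ is by construction the push-forward of $\lambda_{\nu\otimes\sigma}\bigr\rvert_{\Pi_0(\X)}$ under $\mathcal{R}\bigr\rvert_{\Pi_0(\X)}$, the right-hand side equals $\lambda_{\K_0,\nu\otimes\sigma}(A)$, yielding the claim.

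The only mildly delicate step is the second one: I must be sure that the function $K_\X\tilde F$ (rather than only $K_\Pi F$) is actually integrable against $\pi_{\nu\otimes\sigma}$, so that the Poisson--Lebesgue--Poisson correlation identity applies. This follows because $\tilde F$ is bounded with support in the bounded set $\{\zeta:\zeta\subset[a,b]\times\Lambda\}\subset\Gamma_0(\X)$, whence by the standard polynomial bound analogous to \eqref{polbddpd} one has $|(K_\X\tilde F)(\gamma)|\leq C(1+|\gamma\cap([a,b]\times\Lambda)|)^N$, and $\nu\otimes\sigma$ gives $[a,b]\times\Lambda$ finite mass (as $\nu$ is Radon on $\R^*_+$, hence finite on $[a,b]$, and $\sigma$ is finite on $\Lambda$); consequently all Poisson moments of $|\gamma\cap([a,b]\times\Lambda)|$ are finite. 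No additional moment assumption on $\nu$ beyond \eqref{allmoments} (already present in the hypothesis) is needed.
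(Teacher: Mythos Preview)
Your proof is correct and follows essentially the same route as the paper: both transfer the problem to $\Gamma(\X)$ via the $\sigma$-isomorphism $\mathcal{R}$ and then use the known fact that $\lambda_{\nu\otimes\sigma}$ is the correlation measure of $\pi_{\nu\otimes\sigma}$. The only difference is granularity of citation: you invoke the global correlation-measure identity from \cite{harmana} directly, whereas the paper first localizes to $[a,b]\times\Lambda$, writes the restricted Poisson measure as $e^{-\nu([a,b])\sigma(\Lambda)}\lambda_{\nu\otimes\sigma}$, and then applies \cite[Lemma~A.1]{harmana} to split the $K$-sum---effectively re-deriving the same identity from more elementary pieces.
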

\begin{proof}
  Let $A\in\B_\mathrm{b}(\K_0(X))$ and \eqref{Eq4} holds. 
  Then, by \eqref{5Eq1}, $(K\i_A)(\eta)
  =(K\i_A)(p_{\Lambda,a,b}(\eta))$ and hence,
  \[
    \int_{\K(X)} (K\i_A)(\eta)d\pi_{\K,\nu\otimes\sigma}(\eta) = \int_{\K([a,b]\times\Lambda)} (K\i_A)(\eta)d\pi^{\Lambda,a,b}_{\K,\nu\otimes\sigma}(\eta).
  \]
Then, by \eqref{eqd2qwr2413} and Proposition~\ref{support} with $\hat{X}$ replaced by $[a,b]\times\Lambda$, we get that
\begin{align*}
  &\quad \int_{\K([a,b]\times\Lambda)} (K\i_A)(\eta)d\pi^{\Lambda,a,b}_{\K,\nu\otimes\sigma}(\eta)\\&=
  \int_{\Gamma([a,b]\times\Lambda)} (K_{\hat{X}}\i_{\mathcal{R}^{-1}A})(\gamma)d\pi^{[a,b]\times \Lambda}_{\nu\otimes\sigma}(\gamma)\\
  &=
  e^{-\nu([a,b])\sigma(\Lambda)}\int_{\Gamma([a,b]\times\Lambda)} \sum_{\xi\subset\gamma}\i_{\mathcal{R}^{-1}A}(\xi) d\lambda_{\nu\otimes\sigma}(\gamma)
  \intertext{and rewriting $\i_{\mathcal{R}^{-1}A}(\xi)=\i_{\mathcal{R}^{-1}A}(\xi)\cdot\i_{\Gamma([a,b]\times\Lambda)}(\gamma\setminus\xi)$, we can apply e.g. \cite[Lemma~A.1]{harmana}}&=
  e^{-\nu([a,b])\sigma(\Lambda)}\int_{\Gamma([a,b]\times\Lambda)} \i_{\mathcal{R}^{-1}A}(\xi) d\lambda_{\nu\otimes\sigma}(\xi)\int_{\Gamma([a,b]\times\Lambda)} d\lambda_{\nu\otimes\sigma}(\gamma) \\&=\lambda_{\nu\otimes\sigma}({\mathcal{R}^{-1}A})=\lambda_{\K_0,\nu\otimes\sigma}(A),
\end{align*}
which proves the statement.
\end{proof}

Let $\mu\in\mathcal{M}^1_{\mathrm{fm}}(\K(X))$. Then, by \eqref{polbddpd}, $K(B_{\mathrm{bs}}(\K_0(X)))\subset L^1(\K(X),\mu)$. Thus,
$B_{\mathrm{bs}}(\K_0(X))\subset L^1(\K_0(X),\rho_\mu)$ and standard techniques of the measure theory yield
\begin{equation}
\int_{\K_0(X)}G(\eta)\,d\rho_\mu(\eta)=\int_{\K(X)}(KG)(\eta)\,d\mu(\eta),\quad G\in B_{\mathrm{bs}}(\K_0(X)).\label{Eq9}
\end{equation}

The density of $B_{\mathrm{bs}}(\K_0(X))$ in $L^1(\K_0(X),\rho_\mu)$ allows to
extend the $K$-trans\-form to a bounded operator. We will keep the same notation for the extended operator. More
precisely, we have the following result.

\begin{proposition}\label{Proposition4}
Let $\mu\in\mathcal{M}^1_{\mathrm{fm}}(\K(X))$. Then, there is a bounded operator $K:L^1(\K_0(X),\rho_\mu)\to L^1(\K(X),\mu)$
such that formula \eqref{Eq9} holds for any $G\in L^1(\K_0(X),\rho_\mu)$. Moreover, for each
$G\in L^1(\K_0(X),\rho_\mu)$, equality \eqref{Eq3} holds for $\mu$-almost all $\eta\in\K(X)$.  
\end{proposition}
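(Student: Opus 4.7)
The strategy is to first establish $K$ as a contraction on the dense subspace $B_{\mathrm{bs}}(\K_0(X))$, extend by continuity, and then identify the resulting extension with the pointwise series \eqref{Eq3}.

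For contractivity, observe that if $G\in B_{\mathrm{bs}}(\K_0(X))$ is non-negative, then \eqref{Eq3} yields $KG\geq 0$, so \eqref{Eq9} gives
\[
\|KG\|_{L^1(\K(X),\mu)}=\int_{\K_0(X)}G\,d\rho_\mu=\|G\|_{L^1(\K_0(X),\rho_\mu)}.
\]
For a general $G\in B_{\mathrm{bs}}(\K_0(X))$, the pointwise bound $|KG|\leq K|G|$, immediate from \eqref{Eq3}, combined with the previous identity applied to $|G|$, gives $\|KG\|_{L^1(\mu)}\leq\|G\|_{L^1(\rho_\mu)}$. The subspace $B_{\mathrm{bs}}(\K_0(X))$ is dense in $L^1(\K_0(X),\rho_\mu)$ (the bounded sets of $\B_{\mathrm{b}}(\K_0(X))$ exhaust $\K_0(X)$, and truncations of any $L^1$-function by such sets approximate it in norm), so $K$ extends uniquely and continuously to a contraction $K:L^1(\K_0(X),\rho_\mu)\to L^1(\K(X),\mu)$, and \eqref{Eq9} passes to the limit.

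For the pointwise statement, I would next extend the formula $(KG)(\eta):=\sum_{\xi\Subset\eta}G(\xi)\in[0,\infty]$ to all non-negative measurable $G:\K_0(X)\to[0,\infty]$. Taking truncations $G_n:=(G\wedge n)\i_{A_n}\uparrow G$ along a countable exhaustion $A_n\uparrow\K_0(X)$ by bounded sets, the monotonicity $(KG_n)(\eta)\uparrow(KG)(\eta)$ follows from the countability of $\{\xi\colon\xi\Subset\eta\}$, and applying monotone convergence to both sides of \eqref{Eq9} for $G_n\in B_{\mathrm{bs}}(\K_0(X))$ yields $\int_{\K(X)}KG\,d\mu=\int_{\K_0(X)}G\,d\rho_\mu$ for every non-negative measurable $G$. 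Applied to $|G|$ for $G\in L^1(\K_0(X),\rho_\mu)$, this shows $\sum_{\xi\Subset\eta}|G(\xi)|<\infty$ for $\mu$-a.a.\ $\eta$, so the series $\widetilde{K}G(\eta):=\sum_{\xi\Subset\eta}G(\xi)$ converges absolutely at such $\eta$.

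The main obstacle, and the final step, is to identify $\widetilde{K}G$ with the $L^1$-extension $KG$ of the first paragraph. Pick $G_n\in B_{\mathrm{bs}}(\K_0(X))$ with $G_n\to G$ in $L^1(\K_0(X),\rho_\mu)$. The term-by-term estimate
\[
\bigl|(KG_n)(\eta)-\widetilde{K}G(\eta)\bigr|\leq (K|G_n-G|)(\eta),
\]
valid wherever $\widetilde{K}G(\eta)$ is finite, combined with $\int_{\K(X)}K|G_n-G|\,d\mu=\|G_n-G\|_{L^1(\rho_\mu)}\to 0$, gives $KG_n\to\widetilde{K}G$ $\mu$-a.e.\ along a subsequence. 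On the other hand, by continuity of the $L^1$-extension, $KG_n\to KG$ in $L^1(\K(X),\mu)$, so a further subsequence converges $\mu$-a.e.\ to $KG$. Hence $KG=\widetilde{K}G$ $\mu$-a.e., which is precisely the pointwise representation \eqref{Eq3} for $\mu$-a.a.\ $\eta\in\K(X)$.
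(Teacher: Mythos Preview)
Your argument is correct and is precisely the ``standard techniques of measure theory'' that the paper invokes by reference to \cite[Corollary 4.1 and Theorem 4.1]{harmana}: contractivity of $K$ on $B_{\mathrm{bs}}(\K_0(X))$ via \eqref{Eq9} and positivity, extension by density, and identification of the abstract extension with the pointwise series through monotone convergence and a subsequence argument. The paper gives no further details beyond that citation, so your write-up in fact supplies what the paper omits.
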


\begin{proof}
The proof is based on standard techniques of measure theory and follows similarly to
\cite[Corollary 4.1 and Theorem 4.1]{harmana}.
\end{proof}

\begin{remark}
  In particular, for any $\mu\in\mathcal{M}^1_{\mathrm{fm}}(\K(X))$ and $f:X\to\R$ such that $e_\K(f)\in L^1(\K_0(X),\rho_\mu)$, it follows from Proposition~\ref{Proposition4} that \eqref{Kcohst} holds for $\mu$-a.a.~$\eta\in\K(X)$.
\end{remark}

\begin{proposition}\label{abscont}
  Let the conditions of Proposition~\ref{lacormeasure} hold. Let $\mu\in\mathcal{M}^1_{\mathrm{fm}}(\K(X))$ be locally absolutely continuous with respect to the
  measure $\pi_{\K,\nu\otimes\sigma}$. Let $\rho_\mu$ be the correlation measure of $\mu$ according to Definition~\ref{def:corm}. Then
  $\rho_\mu$ is absolutely continuous with respect to $\lambda_{\K_0,\nu\otimes\sigma}$. 
  \end{proposition}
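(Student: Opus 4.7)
The plan is to reduce to bounded test sets and then transfer a null‐set statement from $\pi_{\K,\nu\otimes\sigma}$ to $\mu$, using Proposition~\ref{lacormeasure} together with the localisation property \eqref{5Eq1} of the $K$-transform.

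First, I would exhaust $\K_0(X)$ by bounded sets. Choosing an increasing sequence of compact sets $\Lambda_n\uparrow X$ (which exists since $X$ is a locally compact Polish space), the sets
\[
 B_n:=\bigl\{\eta\in\K_0(X):\tau(\eta)\subset\Lambda_n,\ |\tau(\eta)|\leq n,\ s_x\in[1/n,n]\ \forall x\in\tau(\eta)\bigr\}
\]
lie in $\B_\mathrm{b}(\K_0(X))$ and cover $\K_0(X)$, because any $\eta\in\K_0(X)$ has only finitely many atoms with strictly positive weights bounded above and below. Hence it suffices to prove: for each $A\in\B_\mathrm{b}(\K_0(X))$, the assumption $\lambda_{\K_0,\nu\otimes\sigma}(A)=0$ implies $\rho_\mu(A)=0$.

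Fix such an $A$ and pick $0<a<b$, $\Lambda\in\B_\mathrm{c}(X)$, $N\in\N$ witnessing \eqref{Eq4}. By \eqref{5Eq1} one has $(K\i_A)(\eta)=(K\i_A)(p_{\Lambda,a,b}(\eta))$ for every $\eta\in\K(X)$, so the (measurable, non-negative) function $K\i_A$ factors through $p_{\Lambda,a,b}$. In particular, the set $\{K\i_A>0\}\subset\K(X)$ is of the form $p_{\Lambda,a,b}^{-1}(\tilde B)$ for some $\tilde B\in\B(\K([a,b]\times\Lambda))$. Proposition~\ref{lacormeasure} gives
\[
 0=\lambda_{\K_0,\nu\otimes\sigma}(A)=\int_{\K(X)}(K\i_A)(\eta)\,d\pi_{\K,\nu\otimes\sigma}(\eta),
\]
and since $K\i_A\geq0$, I conclude $\pi_{\K,\nu\otimes\sigma}\bigl(p_{\Lambda,a,b}^{-1}(\tilde B)\bigr)=0$, i.e.\ $\pi_{\K,\nu\otimes\sigma}^{\Lambda,a,b}(\tilde B)=0$.

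Finally, local absolute continuity of $\mu$ with respect to $\pi_{\K,\nu\otimes\sigma}$ yields $\mu^{\Lambda,a,b}(\tilde B)=0$, which unravels to $\mu\bigl(p_{\Lambda,a,b}^{-1}(\tilde B)\bigr)=\mu\bigl(\{K\i_A>0\}\bigr)=0$. Therefore $\rho_\mu(A)=\int_{\K(X)}(K\i_A)\,d\mu=0$, and the exhaustion from Step~1 extends this to arbitrary $\lambda_{\K_0,\nu\otimes\sigma}$-null sets in $\B(\K_0(X))$. I do not anticipate a genuine obstacle; the only real point is to observe that the localisation \eqref{5Eq1} makes $\{K\i_A>0\}$ a $p_{\Lambda,a,b}$-cylinder set, which is exactly the hypothesis under which local absolute continuity can be invoked.
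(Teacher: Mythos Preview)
Your proof is correct and follows essentially the same route as the paper: both use Proposition~\ref{lacormeasure} to convert $\lambda_{\K_0,\nu\otimes\sigma}(A)=0$ into $K\i_A=0$ $\pi_{\K,\nu\otimes\sigma}$-a.e., then invoke the localisation \eqref{5Eq1} so that local absolute continuity transfers this null statement to $\mu$. The only cosmetic differences are that you make the exhaustion of $\K_0(X)$ by bounded sets explicit (the paper leaves this implicit, treating only $A\in\B_\mathrm{b}(\K_0(X))$), and you phrase the transfer via the cylinder set $\{K\i_A>0\}=p_{\Lambda,a,b}^{-1}(\tilde B)$ whereas the paper writes out the Radon--Nikodym derivative $\frac{d\mu^{\Lambda,a,b}}{d\pi^{\Lambda,a,b}_{\K,\nu\otimes\sigma}}$ directly.
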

    \begin{proof}
  Given $A\in\mathcal{B}_\mathrm{b}(\K_0(X))$, assume that $\lambda_{\K_0,\nu\otimes\sigma}(A)=0$. Hence, for some $0<a<b$,
  $N\in\N$, $\Lambda\in\B_\mathrm{c}(X)$, \eqref{Eq4} holds for all $\eta\in A$ and we have
  \begin{align*}
  0&=\lambda_{\K_0,\nu\otimes\sigma}(A)
  \\&=\int_{\K(X)}\left(K\i_A\right)(\eta)d\pi_{\K,\nu\otimes\sigma}(\eta)\\
  &=\int_{\K(X)}\left(K\i_A\right)(p_{\Lambda,a,b}(\eta))d\pi_{\K,\nu\otimes\sigma}(\eta)=\int_{\K([a,b]\times \Lambda)}\left(K\i_A\right)(\eta)d\pi^{\Lambda,a,b}_{\K,\nu\otimes\sigma}(\eta).
  \end{align*}
  This implies that $K\i_A=0$ $\pi^{\Lambda,a,b}_{\K,\nu\otimes\sigma}$-a.e.~on $\K([a,b]\times\Lambda)$. As a result,
 \begin{align*}
  \rho_\mu(A)&=\int_{\K(X)}\left(K\i_A\right)(\eta)d\mu(\eta)\\&=\int_{\K([a,b]\times\Lambda)}\left(K\i_A\right)(\eta)\frac{d\mu^{\Lambda,a,b}}{d\pi^{\Lambda,a,b}_{\K,\nu\otimes\sigma}}(\eta)d\pi^{\Lambda,a,b}_{\K,\nu\otimes\sigma}(\eta)=0.\qedhere
 \end{align*}
  \end{proof}

\begin{definition}
Let the conditions of Proposition \ref{abscont} hold. The Radon--Nikodym derivative 
$k_\mu:=\dfrac{d\rho_\mu}{d\lambda_{\K_0,\nu\otimes\sigma}}$
is called the correlation function corresponding to $\mu$.
\end{definition}

\section{Finite-difference calculus on the cone}

\subsection{Discrete gradients on $\K(X)$}

The discrete structure of measures in $\K(X)$ suggests a development of a finite-difference calculus on $\K(X)$. For
$\eta\in\K(X)$, the elementary operations on $\eta$ which will be considered are the following ones:
\begin{itemize}
\item removing one point $x$ from the support of $\eta$: $\eta\mapsto\eta-s_x\delta_x$;
\item adding a new point $x$ with a weight $s$ to $\eta$: $\eta\mapsto\eta+s\delta_x$, $s\in\R^*_+$.
\end{itemize}

As a set of test functions on $\K(X)$ we will consider the space $\F:=\mathcal{F}C_\mathrm{b}(C_\mathrm{c}(\X),\mathbb{K}(X))$ of all functions $F:\K(X)\to\R$
of the form
\[
F(\eta)=g\left(\langle\mathcal{R}^{-1}\eta,\varphi_1\rangle,\dotsc,\langle\mathcal{R}^{-1}\eta,\varphi_N\rangle\right),\quad\eta\in\K(X),
\]
where $g\in C_\mathrm{b} (\R^N)$, $\varphi_1,\dotsc,\varphi_N\in C_\mathrm{c} (\X)$, $N\in\N$. We fix
two non-atomic positive Radon measures $\nu$ and $\sigma$ on $\mathcal{B}(\R^*_+)$ and $\mathcal{B}(X)$,
respectively, such that $\nu$ has finite first moment.

\begin{definition}
Let $F\in\F$.

1.~A discrete death gradient of $F$ is defined by
\[
(D_x^-F)(\eta):=F(\eta-s_x\delta_x)-F(\eta),\quad\eta\in\K(X),x\in\tau(\eta).
\]  
The corresponding tangent space is chosen to be $T^-_\eta(\K(X)):=L^2(X,\eta)$.

2.~A discrete birth gradient of $F$ is defined by
\[
(D_{(s,x)}^+F)(\eta):=F(\eta+s\delta_x)-F(\eta),\quad\eta\in\K(X),
\]  
where $(s,x)\in\X$, $x\notin\tau(\eta)$. Here, the corresponding tangent space is chosen to be
$T^+_\eta(\K(X)):=L^2(\X,s\nu(ds)\otimes\sigma(dx))$.
\end{definition}

Observe that, for a fixed $\eta\in\K(X)$, the function $\tau(\eta)\ni x\mapsto (D_x^-F)(\eta)$ is bounded and has 
compact support in $X$. Thus, $(D_\cdot^-F)(\eta)\in T^-_\eta(\K(X))$. For each $h\in C_\mathrm{c}(X)$, we define a
directional derivative along $h$ by
\begin{align*}
(D_h^-F)(\eta):&=\langle(D_\cdot^-F)(\eta),h\rangle_{T_\eta^-(\K(X))}\\
&=\int_{X}(D_x^-F)(\eta)h(x)d\eta(x)=\sum_{x\in\tau(\eta)}s_xh(x)(D_x^-F)(\eta).
\end{align*}
As easily seen, for each $\eta\in\K(X)$ fixed, we also have $(D_\cdot^+F)(\eta)\in T^+_\eta(\K(X))$ and we define a
directional derivative along a direction $h\in C_\mathrm{c}(X)$ by
\begin{align*}
  (D_h^+F)(\eta):&=\langle(D_\cdot^+F)(\eta),h\rangle_{T_\eta^+(\K(X))}\\
&=\int_X\int_{\R^*_+}(D_{(s,x)}^+F)(\eta)sd\nu(s)h(x)d\sigma(x).
\end{align*}
The next result states a relation between these two notions of directional derivative.

\begin{proposition}
For any $F,G\in\F$ and $h\in C_\mathrm{c}(X)$ we have
\begin{align*}
&\int_{\K(X)}(D_h^-F)(\eta)G(\eta)\,d\pi_{\K,\nu\otimes\sigma}(\eta)\\
&=\int_{\K(X)}F(\eta)(D_h^+G)(\eta)\,d\pi_{\K,\nu\otimes\sigma}(\eta)
-\int_{\K(X)}F(\eta)G(\eta)B_{\nu,\sigma,h}(\eta)\,d\pi_{\K,\nu\otimes\sigma}(\eta),
\end{align*}
where
\[
B_{\nu,\sigma,h}(\eta):=\int_Xh(x)\,d\eta(x)-\int_{\R^*_+}s\,d\nu(s)\int_Xh(x)\,d\sigma(x).
\]
\end{proposition}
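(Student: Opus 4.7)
The plan is to derive this identity by applying the Mecke-type identity \eqref{yfuftyttfyoi9u} to transform the death-type expression on the left into a birth-type expression, picking up the $B_{\nu,\sigma,h}$ correction term along the way. The key observation is that $D_h^-$ is expressed as an integral against $d\eta(x)$, which is precisely the form that Mecke's formula converts into an integral against $s\,d\nu(s)\,d\sigma(x)$, the same measure that defines $D_h^+$.

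First I would unfold the death gradient: for each $\eta\in\K(X)$,
\[
(D_h^-F)(\eta)G(\eta)=\int_X h(x)F(\eta-s_x\delta_x)G(\eta)\,d\eta(x)-\langle\eta,h\rangle F(\eta)G(\eta).
\]
The second summand already has the shape $F(\eta)G(\eta)\langle\eta,h\rangle$ that matches one half of $B_{\nu,\sigma,h}(\eta)$ under the measure $\pi_{\K,\nu\otimes\sigma}$. Next I would apply \eqref{yfuftyttfyoi9u} to the first summand, with $H(\eta,s,x):=h(x)F(\eta-s\delta_x)G(\eta)$ (splitting into positive and negative parts to respect the non-negativity convention of the Mecke identity — all integrals converge absolutely because $F$, $G$ are bounded, $h\in C_\mathrm{c}(X)$, and $\nu$ has finite first moment). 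This yields
\[
\int_{\K(X)}\!\!\int_X h(x)F(\eta-s_x\delta_x)G(\eta)\,d\eta(x)\,d\pi_{\K,\nu\otimes\sigma}(\eta)
=\int_{\K(X)}\!\!\int_X\!\!\int_{\R^*_+} s\,h(x)F(\eta)G(\eta+s\delta_x)\,d\nu(s)\,d\sigma(x)\,d\pi_{\K,\nu\otimes\sigma}(\eta),
\]
where I used that, by non-atomicity of $\sigma$, we have $x\notin\tau(\eta)$ for $\pi_{\K,\nu\otimes\sigma}\otimes\sigma$-a.e.~$(\eta,x)$, so $(\eta+s\delta_x)-s\delta_x=\eta$ almost everywhere.

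Then I would decompose $G(\eta+s\delta_x)=G(\eta)+(D^+_{(s,x)}G)(\eta)$. The contribution from $(D^+_{(s,x)}G)(\eta)$ equals $\int_{\K(X)}F(\eta)(D_h^+G)(\eta)\,d\pi_{\K,\nu\otimes\sigma}(\eta)$ by the very definition of the directional birth derivative. The contribution from the $G(\eta)$ piece equals
\[
\left(\int_{\R^*_+}\!s\,d\nu(s)\int_X h(x)\,d\sigma(x)\right)\int_{\K(X)}F(\eta)G(\eta)\,d\pi_{\K,\nu\otimes\sigma}(\eta),
\]
and combining this with the $-\langle\eta,h\rangle F(\eta)G(\eta)$ term from Step 1 produces exactly $-\int F(\eta)G(\eta)B_{\nu,\sigma,h}(\eta)\,d\pi_{\K,\nu\otimes\sigma}(\eta)$.

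There is no real obstacle here beyond bookkeeping; the only points that require a moment of care are (i) justifying the use of the Mecke identity for the signed, but absolutely integrable, function $H$ via decomposition into positive and negative parts, and (ii) the almost-everywhere identification $(\eta+s\delta_x)-s\delta_x=\eta$, which follows from the non-atomicity of $\sigma$ together with the fact that $\tau(\eta)$ is at most countable. Everything else is a direct rearrangement of the three resulting integrals.
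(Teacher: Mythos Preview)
Your proof is correct and follows essentially the same route as the paper: both unfold the death gradient, apply the Mecke-type identity \eqref{yfuftyttfyoi9u} to the term $\int_X h(x)F(\eta-s_x\delta_x)G(\eta)\,d\eta(x)$, then split $G(\eta+s\delta_x)=G(\eta)+(D^+_{(s,x)}G)(\eta)$ and collect the resulting pieces into the $D_h^+G$ term and the $B_{\nu,\sigma,h}$ correction. Your additional remarks on the sign decomposition of $H$ and the almost-everywhere identification $(\eta+s\delta_x)-s\delta_x=\eta$ are justified care that the paper leaves implicit.
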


\begin{proof}
By formula \eqref{yfuftyttfyoi9u}, 
\begin{align*}
&\int_{\K(X)}\int_X F(\eta-s_x\delta_x)h(x)d\eta(x)G(\eta)\,d\pi_{\K,\nu\otimes\sigma}(\eta)\\
&=\int_{\K(X)}\int_X\int_{\R^*_+}sF(\eta)h(x)G(\eta+s\delta_x)\,d\nu(s)d\sigma(x)d\pi_{\K,\nu\otimes\sigma}(\eta)\\
&=\int_{\K(X)}F(\eta)(D_h^+G)(\eta)\,d\pi_{\K,\nu\otimes\sigma}(\eta)
\\&\quad +\int_{\K(X)}F(\eta)G(\eta)\,d\pi_{\K,\nu\otimes\sigma}(\eta)\int_{\R^*_+}s\,d\nu(s)\int_Xh(x)\,d\sigma(x).
\end{align*}
Therefore, by the definition of $D_h^-F$, the required formula follows.
\end{proof}

We proceed to show that a Laplacian-type operator associated with the discrete birth-and-death gradients exists.
For each $F,G\in\F$, let $\mathcal{E}$ be the Dirichlet integral associated with the discrete death gradient:
\begin{align*}
\mathcal{E}(F,G)&:=\int_{\K(X)}\langle (D_\cdot^-F)(\eta),(D_\cdot^-G)(\eta)\rangle_{T^-_\eta(\K(X))}\,d\pi_{\K,\nu\otimes\sigma}(\eta)\\
&=\int_{\K(X)}\int_X (D_x^-F)(\eta)(D_x^-G)(\eta)\,d\eta(x)d\pi_{\K,\nu\otimes\sigma}(\eta).
\end{align*}
It turns out by formula \eqref{yfuftyttfyoi9u} that, actually, $\mathcal{E}$ coincides with the Dirichlet integral
associated with the discrete birth gradient:
\begin{align}
\mathcal{E}(F,G)&:=\int_{\K(X)}\langle (D_\cdot^+F)(\eta),(D_\cdot^+G)(\eta)\rangle_{T^+_\eta(\K(X))}\,d\pi_{\K,\nu\otimes\sigma}(\eta)\nonumber \\
&=\int_{\K(X)}\int_X\int_{\R^*_+} s(D_{(s,x)}^+F)(\eta)(D_{(s,x)}^+G)(\eta)\,d\nu(s)d\sigma(x)d\pi_{\K,\nu\otimes\sigma}(\eta).\label{1Dec}
\end{align}

\begin{proposition}\label{propZh1}
The $(\mathcal{E},\F)$ is a well-defined symmetric bilinear form on $L^2(\K(X),\pi_{\K,\nu\otimes\sigma})$.  
\end{proposition}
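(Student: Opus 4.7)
The plan is to verify four items: (i) $\F \subset L^2(\K(X),\pi_{\K,\nu\otimes\sigma})$; (ii) finiteness of $\mathcal{E}(F,G)$ for every $F,G \in \F$; (iii) bilinearity; and (iv) symmetry. Items (i), (iii), (iv) are essentially immediate. Every $F \in \F$ has the form $F(\eta) = g\bigl(\langle \mathcal{R}^{-1}\eta,\varphi_1\rangle,\ldots,\langle \mathcal{R}^{-1}\eta,\varphi_N\rangle\bigr)$ with $g \in C_\mathrm{b}(\R^N)$, hence $F$ is bounded; since $\pi_{\K,\nu\otimes\sigma}$ is a probability measure by Proposition~\ref{support}, this gives $\F \subset L^\infty(\pi_{\K,\nu\otimes\sigma}) \subset L^2(\pi_{\K,\nu\otimes\sigma})$. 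Bilinearity and symmetry of $\mathcal{E}$ are inherited from the linearity of $F \mapsto D^-_\cdot F$ and the symmetric bilinear nature of the inner product on $T^-_\eta(\K(X)) = L^2(X,\eta)$.

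The substantive content is (ii), and rests on a localization observation for $D_x^-$. For $F \in \F$ each test function $\varphi_j$ has compact support in $\X = \R^*_+ \times X$, which is necessarily contained in a set $[a_F,b_F] \times \Lambda_F$ with $0 < a_F < b_F < \infty$ and $\Lambda_F \in \B_\mathrm{c}(X)$; here the logarithmic metric on $\R^*_+$ is crucial, since it excludes neighbourhoods of $0$ and $+\infty$ from any compact. Consequently, $(D_x^- F)(\eta) = F(\eta - s_x\delta_x) - F(\eta)$ can only be non-zero when removing $s_x\delta_x$ alters some $\langle \mathcal{R}^{-1}\eta,\varphi_j\rangle$, which forces $x \in \Lambda_F$ and $s_x \in [a_F,b_F]$. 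Choosing $\Lambda \supseteq \Lambda_F \cup \Lambda_G$ and using $|D_x^- F| \leq 2\|F\|_\infty$, one obtains the pointwise bound
\begin{equation*}
  \left| \int_X (D_x^- F)(\eta)(D_x^- G)(\eta)\,d\eta(x) \right| \leq 4\|F\|_\infty\|G\|_\infty \!\!\sum_{x \in \tau(\eta) \cap \Lambda}\!\! s_x \leq 4\|F\|_\infty\|G\|_\infty\,\eta(\Lambda).
\end{equation*}
Integrating over $\pi_{\K,\nu\otimes\sigma}$ and invoking the finite first local moment
$\int_{\K(X)} \eta(\Lambda)\,d\pi_{\K,\nu\otimes\sigma}(\eta) = \sigma(\Lambda)\int_{\R^*_+} s\,d\nu(s) < \infty$, which follows by pushforward from~\eqref{finite1stmoment} together with~\eqref{fmoment}, yields $|\mathcal{E}(F,G)| < \infty$.

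The main obstacle, such as it is, is the rigorous formulation of the localization of $D_x^-$; once this is in place, everything reduces to a dominated-convergence style estimate. As an incidental benefit, the same uniform bound (and its birth-gradient counterpart, obtained by replacing $\eta$ with $\eta + s\delta_x$) justifies applying the Mecke-type identity \eqref{yfuftyttfyoi9u} to $H(\eta,s,x) := (D_x^- F)(\eta)(D_x^- G)(\eta)$, after a positive/negative decomposition. Using the identity $(D_x^- F)(\eta + s\delta_x) = -(D^+_{(s,x)} F)(\eta)$ (the two signs cancel in the product), this recovers the birth-gradient representation \eqref{1Dec} advertised before the proposition, confirming that both expressions define the same symmetric bilinear form.
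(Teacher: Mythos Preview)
Your argument for finiteness of $\mathcal{E}(F,G)$ is correct and the localization of $D_x^-F$ is handled properly. However, you have misidentified what ``well-defined'' means in this statement, and consequently the main content of the proposition is missing from your proof.

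The form $(\mathcal{E},\F)$ is asserted to be a bilinear form \emph{on} $L^2(\K(X),\pi_{\K,\nu\otimes\sigma})$, whose elements are equivalence classes modulo $\pi_{\K,\nu\otimes\sigma}$-null sets. Thus ``well-defined'' requires that $\mathcal{E}(F,G)$ depend only on the equivalence classes of $F$ and $G$: concretely, if $F\in\F$ satisfies $F=0$ $\pi_{\K,\nu\otimes\sigma}$-a.e., then $\mathcal{E}(F,G)=0$ for every $G\in\F$. This is \emph{not} automatic. The definition of $\mathcal{E}$ involves the values $F(\eta-s_x\delta_x)$ (equivalently $F(\eta+s\delta_x)$ in the birth representation), and a priori the $\pi_{\K,\nu\otimes\sigma}$-null set on which $F\neq 0$ could be reached by such shifts on a set of positive measure. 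The paper's proof is devoted entirely to this point: using the Mecke-type identity \eqref{yfuftyttfyoi9u} with $H(\eta,s,x)=|F(\eta)|\i_\Lambda(x)$, one shows that $F=0$ a.e.\ forces $F(\eta+s\delta_x)=0$ for $(\pi_{\K,\nu\otimes\sigma}\otimes\nu\otimes\sigma)$-a.a.\ $(\eta,s,x)$, hence $D^+_{(s,x)}F=0$ a.e., and then \eqref{1Dec} gives $\mathcal{E}(F,G)=0$.

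Your last paragraph comes tantalisingly close --- you invoke the Mecke identity and the relation $(D_x^-F)(\eta+s\delta_x)=-(D^+_{(s,x)}F)(\eta)$ --- but you use this only to confirm the equivalence of the two gradient representations, not to establish independence from the choice of representative. To repair the proof, add the argument above (or equivalently note that your Mecke computation already shows $\int |F(\eta+s\delta_x)|\,s\i_\Lambda(x)\,d\nu\,d\sigma\,d\pi_{\K,\nu\otimes\sigma} = \int |F(\eta)|\eta(\Lambda)\,d\pi_{\K,\nu\otimes\sigma}=0$ when $F=0$ a.e.).
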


\begin{proof}
The symmetry and the bilinear property follow directly. Hence, we just have to show that if $F\in\F$ is
$\pi_{\K,\nu\otimes\sigma}$-a.e.~equal to 0, then $\mathcal{E}(F,G)=0$ for every $G\in\F$. This is a
consequence of formula \eqref{yfuftyttfyoi9u}, because for each $\Lambda\in\mathcal{B}_\mathrm{c}(X)$ one
then finds
\begin{align*}
&\int_{\K(X)}\int_X\int_{\R^*_+}s|F(\eta+s\delta_x)|\i_\Lambda(x)\,d\nu(s)d\sigma(x)d\pi_{\K,\nu\otimes\sigma}(\eta)\\
&=\int_{\K(X)}\int_X|F(\eta)|\eta(\Lambda)\,d\pi_{\K,\nu\otimes\sigma}(\eta)=0,
\end{align*}
which implies that $F(\eta+s\delta_x)=0$ for
$\pi_{\K,\nu\otimes\sigma}\otimes\nu\otimes\sigma$-a.a.~$(\eta,s,x)\in\K(X)\times\X$. Thus, $(D^+_{(s,x)} F)(\eta)=0$
for $\pi_{\K,\nu\otimes\sigma}\otimes\nu\otimes\sigma$-a.a.~$(\eta,s,x)\in \K(X)\times\R^*_+\times X$. Hence, by \eqref{1Dec}, for each $G\in\F$ we have $\mathcal{E}(F,G)=0$.
\end{proof}

\begin{proposition}\label{propZh2}
For each $F\in\F$, let
\[
(LF)(\eta):=\int_X(D^-_xF)(\eta)\,d\eta(x)+\int_X\int_{\R^*_+}(D_{(s,x)}^+F)(\eta)sd\nu(s)d\sigma(x).
\]
Then, $(L,\F)$ is a symmetric operator on $L^2(\K(X),\pi_{\K,\nu\otimes\sigma})$ which verifies the following
equality
\begin{equation}
\mathcal{E}(F,G)=\langle-LF,G\rangle_{L^2(\K(X),\pi_{\K,\nu\otimes\sigma})},\quad F,G\in\F.\label{26Eq1}
\end{equation}
The bilinear form $(\mathcal{E},\F)$ is closable on $L^2(\K(X),\pi_{\K,\nu\otimes\sigma})$ and the
operator $(L,\F)$ has Friedrich's extension, denoted by $(L,D(L))$. Moreover, the extended operator $(L,D(L))$ is the
generator of the closed symmetric form, denoted by $(\mathcal{E},D(\mathcal{E}))$. 
\end{proposition}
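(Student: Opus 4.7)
The proposition has four ingredients, and the whole proof organises around a single integration-by-parts identity derived from the Mecke-type identity \eqref{yfuftyttfyoi9u}: (i) $L$ sends $\F$ into $L^2(\K(X),\pi_{\K,\nu\otimes\sigma})$; (ii) the identity \eqref{26Eq1}; (iii) symmetry of $(L,\F)$ and non-negativity of $-L$; and (iv) a direct application of the classical Friedrichs extension theorem.

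\textbf{Well-definedness.} For $F\in\F$ determined by $\varphi_1,\ldots,\varphi_N\in C_\mathrm{c}(\X)$ jointly supported in some $[a,b]\times\Lambda$, both $(D_x^-F)(\eta)$ and $(D^+_{(s,x)}F)(\eta)$ are bounded by $2\|F\|_\infty$ and vanish unless $(s_x,x)$, respectively $(s,x)$, lies in $[a,b]\times\Lambda$. The death piece of $LF$ is therefore dominated pointwise by a constant times the Poisson count $|\mathcal{R}^{-1}\eta\cap([a,b]\times\Lambda)|$, which has finite mean $\nu([a,b])\sigma(\Lambda)$ under $\pi_{\nu\otimes\sigma}$ and hence all moments; the birth piece is uniformly bounded by $2\|F\|_\infty b\,\nu([a,b])\sigma(\Lambda)<\infty$. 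Thus $LF\in L^2(\K(X),\pi_{\K,\nu\otimes\sigma})$.

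\textbf{Integration by parts.} I would then apply \eqref{yfuftyttfyoi9u} to
\[
H(\eta,s,x):=\bigl(F(\eta-s\delta_x)-F(\eta)\bigr)G(\eta),
\]
noting that $H(\eta+s\delta_x,s,x)=-(D^+_{(s,x)}F)(\eta)\,G(\eta+s\delta_x)$ almost surely (since $x\notin\tau(\eta)$ for $\pi_{\K,\nu\otimes\sigma}$-a.a.\ $\eta$). The identity then yields
\begin{multline*}
\int_{\K(X)}\int_X(D_x^-F)(\eta)G(\eta)\,d\eta(x)d\pi_{\K,\nu\otimes\sigma}(\eta)\\
=-\int_{\K(X)}\int_X\int_{\R^*_+}s\,(D^+_{(s,x)}F)(\eta)G(\eta+s\delta_x)\,d\nu(s)d\sigma(x)d\pi_{\K,\nu\otimes\sigma}(\eta).
\end{multline*}
Substituting into $\langle-LF,G\rangle$ and combining with the remaining birth integral $-\int s(D^+_{(s,x)}F)(\eta)G(\eta)\,d\nu d\sigma d\pi_{\K,\nu\otimes\sigma}$ collapses the expression into the birth representation \eqref{1Dec} of $\mathcal{E}(F,G)$, establishing \eqref{26Eq1}.

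\textbf{Symmetry and Friedrichs extension.} Symmetry of $(L,\F)$ is then immediate: $\langle LF,G\rangle=-\mathcal{E}(F,G)=-\mathcal{E}(G,F)=\langle F,LG\rangle$ by Proposition~\ref{propZh1}, while $\langle-LF,F\rangle=\mathcal{E}(F,F)\geq 0$. Since $\F$ is dense in $L^2(\K(X),\pi_{\K,\nu\otimes\sigma})$ (a routine monotone class argument, because $\F$ is a unital algebra of bounded functions separating points of $\K(X)$), the operator $-L$ is symmetric, densely defined and non-negative on $\F$. The classical theorem on Friedrichs extensions therefore produces a distinguished self-adjoint extension $(L,D(L))$ whose associated closed quadratic form coincides with the minimal closed extension of $(\mathcal{E},\F)$; this simultaneously yields closability of the form and identifies the closure as the form generated by $(L,D(L))$.

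\textbf{Main obstacle.} The only genuinely calculational step is the single application of Mecke's identity in the integration-by-parts step; its delicate point is verifying a priori integrability of $H$ against both sides of \eqref{yfuftyttfyoi9u}, which however is immediate from the uniform bounds of the well-definedness step. The density of $\F$ in $L^2(\K(X),\pi_{\K,\nu\otimes\sigma})$ is the other mild technicality, proceeding along the lines of the analogous configuration-space argument in \cite{harmana}.
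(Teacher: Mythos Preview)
Your proposal is correct and follows essentially the same route as the paper: both proofs hinge on a single application of the Mecke-type identity \eqref{yfuftyttfyoi9u} to obtain \eqref{26Eq1}, bound $LF$ in $L^2$ via the compact support $[a,b]\times\Lambda$ carried by $F\in\F$, and then invoke the standard Friedrichs extension theorem. The only cosmetic differences are that the paper first derives $\mathcal{E}(F,G)=-\langle F,LG\rangle$ and then uses the symmetry of $\mathcal{E}$ (whereas you land directly on the birth representation \eqref{1Dec}), and that the paper computes the $L^2$ bound for the death piece by an explicit Mecke calculation rather than your Poisson-count moment argument; you also make explicit the density of $\F$, which the paper leaves implicit.
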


\begin{proof}
First note that, by formula \eqref{yfuftyttfyoi9u}, for any $F,G\in\F$ we have
\begin{align*}
\mathcal{E}(F,G)&=\int_{\K(X)}(D^-_xF)(\eta)G(\eta-s_x\delta_x)\,d\eta(x)d\pi_{\K,\nu\otimes\sigma}(\eta)\\
&\quad-\int_{\K(X)}(D^-_xF)(\eta)G(\eta)\,d\eta(x)d\pi_{\K,\nu\otimes\sigma}(\eta)\\
&=-\int_{\K(X)}\int_X\int_{\R^*_+}sF(\eta)(D^+_{(s,x)}G)(\eta)\,d\nu(s)\sigma(x)d\pi_{\K,\nu\otimes\sigma}(\eta)\\
&\quad-\int_{\K(X)}\int_XF(\eta)(D^-_xG)(\eta)\,d\eta(x)d\pi_{\K,\nu\otimes\sigma}(\eta),
\end{align*}
which, due to the symmetry of $\mathcal{E}$, shows that formula \eqref{26Eq1} holds, provided
$LF\in L^2(\K(X),\pi_{\K,\nu\otimes\sigma})$. In order to prove that $LF\in L^2(\K(X),\pi_{\K,\nu\otimes\sigma})$,
observe that since $F\in\F$, there are $C\geq 0$, $0<a<b$, $\Lambda\in\mathcal{B}_\mathrm{c}(X)$ such that
\begin{align*}
&|(D^-_xF)(\eta)|\leq C\i_{\left[a,b\right]}(s_x)\i_\Lambda(x),\quad x\in\tau(\eta),\eta\in\K(X),\\[2mm]
&|(D^+_{(s,x)}F)(\eta)|\leq C\i_{\left[a,b\right]}(s)\i_\Lambda(x),\quad s\in\R^*_+,x\in X\setminus\tau(\eta),\eta\in\K(X).
\end{align*}
Thus, 
\begin{align}\label{26Eq2}
&\quad \int_{\K(X)}(LF)^2(\eta)\,d\pi_{\K,\nu\otimes\sigma}(\eta)\\\notag
&\leq 2C^2\int_{\K(X)}\left(\int_X\i_{\left[a,b\right]}(s_x)\i_\Lambda(x)\,d\eta(x)\right)^2d\pi_{\K,\nu\otimes\sigma}(\eta) \\
&\quad +2C^2\int_{\K(X)}\left(\int_X\i_\Lambda(x)\,d\sigma(x)\int_{\R^*_+}s\i_{\left[a,b\right]}(s)\,d\nu(s)\right)^2d\pi_{\K,\nu\otimes\sigma}(\eta),\notag
\end{align}
where the latter integral is finite. Concerning \eqref{26Eq2}, three applications of formula \eqref{yfuftyttfyoi9u}
yield
\begin{align*}
&\int_{\K(X)}\left(\int_X\i_{\left[a,b\right]}(s_x)\i_\Lambda(x)\,d\eta(x)\right)^2d\pi_{\K,\nu\otimes\sigma}(\eta)\\
&=\int_{\K(X)}\int_X\int_X\!\i_{\left[a,b\right]}(s_x)\i_\Lambda(x)\i_{\left[a,b\right]}(s_y)\i_\Lambda(y)d\eta(x)d(\eta-s_x\delta_x)(y) d\pi_{\K,\nu\otimes\sigma}(\eta)\\
&+\int_{\K(X)}\int_Xs_x\i_{\left[a,b\right]}(s_x)\i_\Lambda(x)\,d\eta(x)d\pi_{\K,\nu\otimes\sigma}(\eta)\\
&=\int_{\K(X)}\left(\int_X\i_\Lambda(x)\,d\sigma(x)\int_{\R^*_+}s\i_{\left[a,b\right]}(s)\,d\nu(s)\right)^2d\pi_{\K,\nu\otimes\sigma}(\eta)\\
&+\int_{\K(X)}\int_X\int_{\R^*_+}s^2\i_{\left[a,b\right]}(s)\i_\Lambda(x)\,d\nu(s)d\sigma(x)d\pi_{\K,\nu\otimes\sigma}(\eta)<\infty. 
\end{align*}
Hence, $LF\in L^2(\K(X),\pi_{\K,\nu\otimes\sigma})$. Furthermore, by formula \eqref{26Eq1}, $(-L,\F)$ is a positive
symmetric operator in $L^2(\K(X),\pi_{\K,\nu\otimes\sigma})$. It is now standard to prove that the bilinear form
$(\mathcal{E},\F)$ is closable and $(L,\F)$ has Friedrich's extension (see e.g.~\cite{RS}).
\end{proof}

\begin{remark}
  Using techniques of the Dirichlet form theory \cite{MR1992}, it is possible to show that there exists an equilibrium Markov process on $\K(X)$ that has the operator $L$ as its generator, compare with \cite{KL2005} and \cite{CKL2016}.
\end{remark}

\subsection{Polynomial functions on $\K(X)$}

Let $n\in\N$. Let $\M^{(n)}(X)$ be the set of all symmetric real-valued Radon measures on $\bigl(X^n,\B(X^n)\bigr)$, see \cite{FKLO21}. Let $\cF(X)$ be the set of all bounded measurable functions on $X$ with compact support, and $\cF^{(n)}(X)$ the set of all bounded measurable symmetric functions on $X^n$ with compact support. For $\mu^\n\in \M^\n(X)$ and $f^\n\in\cF^\n(X)$, we denote 
\[
  \langle \mu^\n,f^\n\rangle:=\int_{X^n}f^\n d\mu^\n, \quad n\in\N.
\]

Let $\eta=\sum\limits_i s_{x_i}\delta_{x_i}\in\K(X)$. We
set  $P^{(0)}(\eta):=1$ and $P^{(1)}(\eta):=\eta\in\M(X)=\M^{(1)}(X)$, and consider the measure $P^{(n)}(\eta)$ on $(X^n,\B(X^n))$ for $n\geq 2$, given by
\begin{align}\label{fyd6e4}
  &\quad P^{(n)}(\eta)(dx_1\dotsm dx_n)\\&:=
\eta(dx_1)(\eta(dx_2)-s_{x_1}\delta_{x_1}(dx_2))\notag \times\ldots\times\\
&\quad \times 
(\eta(dx_n)-s_{x_1}\delta_{x_1}(dx_n)-s_{x_2}\delta_{x_2}(dx_n)-\ldots -s_{x_{n-1}}\delta_{x_{n-1}}(dx_n)).\notag
\end{align}
It is straightforward to check that \eqref{fyd6e4} does not depend on the ordering in $\eta=\sum\limits_i s_{x_i}\delta_{x_i}$, therefore, $P^{(n)}(\eta)$ is a symmetric measure on $X^n$. Moreover, 
\begin{align} \label{fyd6e2}
  P^{(n)}(\eta)&=\sum_{x_1\in\tau(\eta)}\sum_{x_2\in\tau(\eta)\setminus\{x_1\}}\ldots\sum_{x_n\in\tau(\eta)\setminus\{x_1,\ldots,x_{n-1}\}}s_{x_1}s_{x_2}\ldots s_{x_n}\\& \qquad\qquad \qquad\qquad\qquad \times \delta_{x_1}\otimes\delta_{x_2}\otimes\ldots\otimes\delta_{x_n}\notag\\
  &= n! \sum_{\{x_1,\ldots,x_n\}\subset \tau(\eta)}s_{x_1}s_{x_2}\ldots s_{x_n}\delta_{x_1}\odot\delta_{x_2}\odot\ldots\odot\delta_{x_n},
 \notag
\end{align}
where $\odot$ denotes the tensor product symmetrization, cf. \cite{FKLO21}. Then, for any $f\in\cF(X)$,
\begin{equation}\label{eq:q32}
  \bigl\lvert\langle P^\n(\eta),f^{\otimes n}\rangle\bigr\rvert \leq  \langle \eta, |f|\rangle^n<\infty, \quad n\in\N.
\end{equation}
By the polarization identity, any $\mu^\n\in\M^\n(X)$ is uniquely defined by the values of $\langle \mu^\n,f^{\otimes n}\rangle$ for $f\in\cF(X)$. Therefore, $P^\n(\eta)\in\M^\n(X)$, $n\in\N$.

Let now $n\in\N$ and $f^{(n)}\in \cF^{(n)}(X)$. We define the following \emph{polynomial function} on $\K(X)$
\begin{equation*}
  p_n(\eta):=  \langle P^\n(\eta),f^\n\rangle.
\end{equation*}

\begin{remark}
  We stress that $p_n$ is not the restriction of a polynomial on $\M(X)$ (in the sense of \cite{FKLO21}) to $\K(X)$ as the right-hand side  of \eqref{fyd6e2} may not be even defined for an arbitrary $\eta\in\M(X)\setminus\K(X)$.
\end{remark}

We define also the polynomial sequence  $\{(\omega)_n: n\geq0 \}$ of \emph{falling factorials} on $\M(\X)$, cf. \cite{FKLO16,FKLO21}. Namely, we set $(\omega)_0:=1$, $(\omega)_1:=\omega$; and, for $n\geq2$ and $\hat{y}_i:=(s_i,x_i)$, $1\leq i\leq n$, we define
\begin{align}\label{fyd6e}
  &\quad (\omega)_n(d\hat{y}_1\ldots d\hat{y}_n)\\:&=
\omega(d\hat{y}_1)(\omega(d\hat{y}_2)-\delta_{\hat{y}_1}(d\hat{y}_2))\notag\\
&\qquad\ \times\ldots\times
(\omega(d\hat{y}_k)-\delta_{\hat{y}_1}(d\hat{y}_k)-\delta_{\hat{y}_2}(d\hat{y}_k)-\ldots -\delta_{\hat{y}_{n-1}}(d\hat{y}_n))\notag\\&=
n! \sum_{\{\hat{y}_1,\ldots,\hat{y}_n\}\subset\gamma}\delta_{\hat{y}_1}\odot\delta_{\hat{y}_2}\odot\ldots\odot\delta_{\hat{y}_n}.\notag
\end{align}
By \cite{FKLO16,FKLO21}, the generating function of falling factorials on $\M(\X)$ is 
\begin{equation}
  \sum_{n=0}^\infty\frac{1}{n!}\langle(\omega)_n,\hat{f}^{\otimes n}\rangle=\exp\bigl(\langle\omega,\log(1+\hat{f})\rangle\bigr),\qquad \hat{f}\in\cF(\X),\label{31Eq2}
  \end{equation}
that is understood as an equality of formal power series, see \cite[Subsection 2.2 and Appendix]{FKLO16}. Moreover, by \cite{FKLO16}, the falling factorials are of binomial type, i.e.,
\begin{equation}
  (\omega+\omega')_n=\sum_{k=0}^n\binom{n}{k}(\omega)_k\odot(\omega')_{n-k},\quad\omega,\omega'\in\M(\X),n\in\N,\label{bin-id-marked}
\end{equation}
and the following lowering property holds: for each $n\in\N_0$, $\hat{y}\in\X$, $\omega\in\M(\X)$,
\begin{equation}\label{lowering}
  (\omega+\delta_{\hat{y}})_n-(\omega)_n=n\delta_{\hat{y}}\odot(\omega)_{n-1}.
\end{equation}

Let $f\in\cF(X)$ and consider, for each $j\in\N$,
\[
\hat{f}_j(s,x):=\i_{[\frac{1}{j},j]}(s)\, s\, f(x), \quad (s,x)\in\X.  
\]
Then $\hat{f}_j\in\cF(\X)$ and we can consider
$\langle(\omega)_n,\hat{f}_j^{\otimes n}\rangle$, $\omega\in\M(\X)$. Let $\hat{f}(s,x):=sf(x)$ for $(s,x)\in\X$. Then, by \eqref{fyd6e}, for each $\gamma\in\Pi(\X)\subset\M(\X)$, we have, cf. \eqref{eq:q32},
\[
\bigl\lvert \langle (\gamma)_n, \hat{f}^{\otimes n}\rangle\bigr\rvert \leq \langle \gamma, |\hat{f}|\rangle^n<\infty.
\]
Since $\hat{f}_j\to \hat{f}$, $j\to\infty$, pointwise, the dominated convergence theorem implies that 
\begin{equation*}
  \lim_{j\to\infty }  \langle (\gamma)_n, \hat{f}_j^{\otimes n}\rangle=  \langle (\gamma)_n, \hat{f}^{\otimes n}\rangle, \qquad \gamma\in\Pi(\X).
\end{equation*}
Then, by the polarization identity, for any $f^\n\in\cF^\n(X)$,  we may also define $\langle (\gamma)_n, \hat{f}^\n\rangle$ for $\gamma\in\Pi(\X)$ and
\begin{equation*}
  \hat{f}^\n\bigl((s_1,x_1),\ldots,(s_n,x_n)\bigr) :=s_1\ldots s_n  f^\n(x_1,\ldots,x_n).
\end{equation*}
Then, by \eqref{fyd6e2},
\begin{equation}\label{eq:crucial}
  \langle P^\n(\eta),f^\n\rangle
  =\langle (\mathcal{R}^{-1}\eta)_n,\hat{f}^{(n)}\rangle, \qquad \eta\in\K(X).
\end{equation}

\begin{proposition}\label{prop:binom:cone}
  Let $f\in\cF(X)$. Then, for each $n\in\N$ and for each $\eta$, $\eta'\in\K(X)$ such that
    $\tau(\eta)\cap\tau(\eta')=\emptyset$,
    we have
    \begin{equation*}
    \langle P^{(n)}(\eta+\eta'),f^{\otimes n}\rangle
    =\sum_{k=0}^n\binom{n}{k}\langle P^{(k)}(\eta),f^{\otimes k}\rangle
    \langle P^{(n-k)}(\eta'),f^{\otimes (n-k)}\rangle.
    \end{equation*}
    \end{proposition}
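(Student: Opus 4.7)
My plan is to reduce the claim to the binomial identity \eqref{bin-id-marked} for falling factorials on $\M(\X)$, using \eqref{eq:crucial} as the bridge between polynomial functions on $\K(X)$ and falling factorials on $\Pi(\X)$. The assumption $\tau(\eta)\cap\tau(\eta')=\emptyset$ is used exactly so that the two pinpointing configurations $\omega:=\mathcal{R}^{-1}\eta$ and $\omega':=\mathcal{R}^{-1}\eta'$ have disjoint $\X$-supports (no $x$-coordinate is shared); consequently $\mathcal{R}^{-1}(\eta+\eta')=\omega+\omega'\in\Pi(\X)$, understood both as a Radon measure on $\X$ and as a union of configurations.

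Fix $f\in\cF(X)$ and, for each $j\in\N$, set $\hat{f}_j(s,x):=\i_{[1/j,j]}(s)\,s\,f(x)\in\cF(\X)$. Since \eqref{bin-id-marked} is an equality of symmetric measures in $\M^\n(\X)$, testing it against the bounded, compactly supported, symmetric function $\hat{f}_j^{\otimes n}$ gives
\[
\langle (\omega+\omega')_n,\hat{f}_j^{\otimes n}\rangle
=\sum_{k=0}^n\binom{n}{k}\langle (\omega)_k\odot(\omega')_{n-k},\hat{f}_j^{\otimes n}\rangle.
\]
Because $\hat{f}_j^{\otimes n}$ is itself symmetric, pairing with the symmetrised tensor product $(\omega)_k\odot(\omega')_{n-k}$ equals pairing with $(\omega)_k\otimes(\omega')_{n-k}$; the factorisation $\hat{f}_j^{\otimes n}=\hat{f}_j^{\otimes k}\otimes \hat{f}_j^{\otimes(n-k)}$ together with Fubini then yields
\[
\langle (\omega)_k\odot(\omega')_{n-k},\hat{f}_j^{\otimes n}\rangle
=\langle (\omega)_k,\hat{f}_j^{\otimes k}\rangle\,\langle (\omega')_{n-k},\hat{f}_j^{\otimes(n-k)}\rangle.
\]

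The third step is to let $j\to\infty$. By the dominated convergence argument set up in the paragraph preceding \eqref{eq:crucial}, for any $\gamma\in\Pi(\X)$ and any $m\in\N$ one has $\langle(\gamma)_m,\hat{f}_j^{\otimes m}\rangle\to\langle(\gamma)_m,\hat{f}^{\otimes m}\rangle$ with $\hat{f}(s,x):=sf(x)$, the required majorant being $\langle\gamma,|\hat{f}|\,\rangle^m<\infty$ (which is finite because $f$ has compact support and $\gamma\in\Pi(\X)$). Applying this with $\gamma=\omega+\omega'$ on the left, and with $\gamma\in\{\omega,\omega'\}$ on the right (the outer sum in $k$ being finite), I obtain
\[
\langle (\omega+\omega')_n,\hat{f}^{\otimes n}\rangle
=\sum_{k=0}^n\binom{n}{k}\langle (\omega)_k,\hat{f}^{\otimes k}\rangle\,\langle (\omega')_{n-k},\hat{f}^{\otimes(n-k)}\rangle.
\]
A final application of \eqref{eq:crucial} to each of the three types of pairings translates this back into the stated identity for $\langle P^{(n)}(\eta+\eta'),f^{\otimes n}\rangle$.

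I do not expect a serious obstacle: the conceptual content is entirely carried by \eqref{bin-id-marked} and \eqref{eq:crucial}, and the only mild point is checking that the disjoint-support hypothesis on $X$ lifts to disjointness on $\X$ so that $\mathcal{R}^{-1}$ is additive on the pair $(\eta,\eta')$, and that the limit procedure of the preceding subsection applies uniformly to the three pairings appearing here. Both are immediate from the setup.
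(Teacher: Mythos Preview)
Your proof is correct and follows the same route as the paper: use the disjoint-support hypothesis to get $\mathcal{R}^{-1}(\eta+\eta')=\mathcal{R}^{-1}\eta+\mathcal{R}^{-1}\eta'$, then invoke the binomial identity \eqref{bin-id-marked} together with \eqref{eq:crucial}. The paper states this in one line, whereas you have carefully spelled out the truncation-and-limit argument with $\hat f_j$ and the factorisation of the symmetrised pairing; these are exactly the details implicit in the paper's ``follows immediately.''
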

  \begin{proof}
    By \eqref{def:cR}, the assumption $\tau(\eta)\cap\tau(\eta')=\emptyset$ implies that
    \begin{equation}
      \mathcal{R}^{-1}(\eta+\eta')=\mathcal{R}^{-1} \eta + \mathcal{R}^{-1}\eta'. \label{eqL3r}
    \end{equation}
    Then the statement follows immediately from \eqref{bin-id-marked} and \eqref{eq:crucial}.  
  \end{proof}

\begin{corollary}
Let $f\in \cF(X)$ and $n\in\N$. Then,
  
\noindent
1.~For each $\eta\in\K(X)$, $(s,x)\in\X$ such that $x\notin\tau(\eta)$,
\[
(D^+_{(s,x)}\langle P^{(n)}(\cdot),f^{\otimes n}\rangle)(\eta)=nsf(x)\langle P^{(n-1)}(\eta),f^{\otimes (n-1)}\rangle;
\]
2.~ For each $\eta\in\K(X)$ and $x\in\tau(\eta)$,
 \[
(D^-_x\langle P^{(n)}(\cdot),f^{\otimes n}\rangle)(\eta)=-ns_xf(x)\langle P^{(n-1)}(\eta-s_x\delta_x),f^{\otimes (n-1)}\rangle.
\] 
\end{corollary}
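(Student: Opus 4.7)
The plan is to derive both identities as direct consequences of the binomial formula in Proposition~\ref{prop:binom:cone}, after first observing that the polynomial function $\langle P^{(k)}(\cdot),f^{\otimes k}\rangle$ is trivial on single-atom measures when $k\geq 2$.

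I would begin with the following auxiliary computation: for any $s>0$, $x\in X$ and $k\geq 2$, the representation \eqref{fyd6e2} exhibits $P^{(k)}(s\delta_x)$ as a sum over $k$-element subsets of the singleton $\tau(s\delta_x)=\{x\}$, which is empty; hence $P^{(k)}(s\delta_x)=0$. The remaining cases are $P^{(0)}(s\delta_x)=1$ and $\langle P^{(1)}(s\delta_x),f\rangle = sf(x)$. (Equivalently, one may derive the vanishing from \eqref{eq:crucial} together with the lowering property \eqref{lowering} applied iteratively to the one-point falling factorial $(\delta_{(s,x)})_k$.)

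For part~1, the hypothesis $x\notin\tau(\eta)$ yields $\tau(\eta)\cap\tau(s\delta_x)=\emptyset$, so Proposition~\ref{prop:binom:cone} applies with $\eta$ and $s\delta_x$. Only the terms indexed by $n-k\in\{0,1\}$ survive, giving
\[
\langle P^{(n)}(\eta+s\delta_x),f^{\otimes n}\rangle=\langle P^{(n)}(\eta),f^{\otimes n}\rangle+n\,sf(x)\,\langle P^{(n-1)}(\eta),f^{\otimes(n-1)}\rangle,
\]
and subtracting $\langle P^{(n)}(\eta),f^{\otimes n}\rangle$ yields the stated expression for $D^+_{(s,x)}$.

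For part~2, I would set $\eta':=\eta-s_x\delta_x\in\K(X)$, observe that $x\notin\tau(\eta')$, and apply the binomial identity to $\eta'$ and $s_x\delta_x$ exactly as in part~1 to obtain $\langle P^{(n)}(\eta),f^{\otimes n}\rangle=\langle P^{(n)}(\eta'),f^{\otimes n}\rangle+n\,s_xf(x)\,\langle P^{(n-1)}(\eta'),f^{\otimes(n-1)}\rangle$; rearranging produces the claimed formula for $D^-_x$. No step presents a real obstacle: everything reduces to the binomial identity of Proposition~\ref{prop:binom:cone}, and the only mildly nontrivial ingredient is the vanishing $P^{(k)}(s\delta_x)=0$ for $k\geq 2$, which is transparent from \eqref{fyd6e2}.
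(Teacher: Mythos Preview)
Your proof is correct. Both parts follow cleanly from Proposition~\ref{prop:binom:cone} together with the elementary vanishing $P^{(k)}(s\delta_x)=0$ for $k\ge 2$, which you justify from \eqref{fyd6e2}.

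The paper takes a slightly different route: rather than invoking the binomial identity on the cone (Proposition~\ref{prop:binom:cone}), it drops back to the falling-factorial picture on $\M(\X)$ via \eqref{eq:crucial} and applies the lowering property \eqref{lowering} directly. Part~2 is then obtained, as you do, by rewriting $D^-_x$ at $\eta$ as $-D^+_{(s_x,x)}$ at $\eta-s_x\delta_x$. The two arguments are of course close cousins---the lowering property is the binomial identity specialized to $\omega'=\delta_{\hat y}$, which is exactly your observation that only the $n-k\in\{0,1\}$ terms survive. Your version has the virtue of staying entirely at the cone level and making the ``corollary'' status transparent; the paper's version avoids the auxiliary vanishing lemma by appealing to \eqref{lowering} in one stroke. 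Neither approach has a real advantage in difficulty.
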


\begin{proof}
    1.~By \eqref{eqL3r}, we get from 
    \eqref{lowering} that
    \begin{align*}
      (D^+_{(s,x)}\langle P^{(n)}(\cdot),f^{\otimes n}\rangle)(\eta)&=
      \langle P^{(n)}(\eta+s\delta_x),f^{\otimes n}\rangle-\langle P^{(n)}(\eta),f^{\otimes n}\rangle\\
      &=\langle (\mathcal{R}^{-1}\eta+\delta_{(s,x)})_n,f^{\otimes n}\rangle-\langle (\mathcal{R}^{-1}\eta)_n,\hat{f}^{\otimes n}\rangle\\
      &=n\langle (\delta_{(s,x)}\odot (\mathcal{R}^{-1}\eta)_{n-1}),\hat{f}^{\otimes n}\rangle\\
      &= n\hat{f}(s,x)\langle (\mathcal{R}^{-1}\eta)_{n-1},\hat{f}^{\otimes (n-1)}\rangle\\
      &=nsf(x)\langle P^{(n-1)}(\eta ),f^{\otimes (n-1)}\rangle.
    \end{align*}
    
    2.~By item 1, 
 \begin{align*}
      (D^-_x\langle P^{(n)}(\cdot),f^{\otimes n}\rangle)(\eta)&=
      \langle P^{(n)}(\eta - s_x\delta x),f^{\otimes n}\rangle- \langle P^{(n)}(\eta),f^{\otimes n}\rangle
      \\& =-(D^+_{(s_x,x)}\langle P^{(n)}(\cdot),f^{\otimes n}\rangle)(\eta-s_x\delta_x)\\
      &=-ns_xf(x)\langle P^{(n-1)}(\eta -s_x\delta_x),f^{\otimes (n-1)}\rangle.\qedhere
    \end{align*}
\end{proof}

\begin{remark}
  We can also consider the generating function for $P^\n$, $n\geq0$. Namely, for each $f\in\cF(X)$, we have, by \eqref{eq:crucial}, \eqref{31Eq2} and \eqref{Kcohst},
  \begin{align*}
    \sum_{n=0}^\infty\frac{1}{n!}\langle P^\n(\eta),f^{\otimes n}\rangle
    &=\sum_{n=0}^\infty\frac{1}{n!}\langle(\mathcal{R}^{-1}\eta)_n,\hat{f}^{\otimes n}\rangle\\
    &=\exp\bigl(\langle \mathcal{R}^{-1}\eta,\log(1+\hat{f})\rangle\bigr)\\
    &=\exp\biggl(
\sum_{(s,x)\in\mathcal{R}^{-1}\eta}\log \bigl(1+sf(x)\bigr)  \biggr)\\
& = \exp\biggl(
  \sum_{x\in\tau(\eta)}\log \bigl(1+s_xf(x)\bigr)  \biggr)\\
  &= \prod_{x\in\tau(\eta)}\bigl(1+s_xf(x)\bigr)\\
  &=(Ke_\K(f))(\eta).
  \end{align*}
More generally,
  for any sequence $f^\n\in\cF^\n(X)$, $n\geq0$, one can define the function $F\in \cF_{\exp}(\K_0(X))$ given by, cf. \eqref{Eqr8},
  \[
    F(\xi):=s_{x_1}\ldots s_{x_n}f^\n(x_1,\ldots, x_n)
  \]
  for each $\xi=\sum_{i=1}^n s_{x_i}\delta_{x_i}\in\K_0(X)$, $n\geq1$; $F(0):=f^{(0)}\in\R$.
  Then
  \[
    \sum_{n=0}^\infty\frac{1}{n!}\langle P^\n(\eta),f^\n\rangle  = (KF)(\eta), \quad \eta\in\K(X).
  \]
\end{remark}

\subsection*{Acknowledgments}

MJO was supported by the Portuguese national funds through FCT--Funda{\c c}\~ao para a Ci\^encia e a Tecnologia, 
within the project UIDB/04561/2020 (https://doi.org/10.54499/UIDB/04561/2020).



\begin{thebibliography}{99}

\bibitem{AKR1998a} S.~Albeverio, Yu.~G.~Kondratiev, M.~R\"ockner,
{\it Analysis and geometry on configuration spaces},
J. Funct. Anal. {\bf 154} (1998), 444--500.
  
\bibitem{CCR}
C.~Berg, J.~P.~R.~Christensen, P.~Ressel,
{\it Positive definite functions on Abelian semigroups},
Math. Ann. {\bf 223} (1976), 253--274.

\bibitem{CKL2016}
D. Conache, Y. G. Kondratiev, E.~Lytvynov,
{\it Equilibrium diffusion on the cone of discrete Radon measures}, Potential Anal. {\bf 44} (2016), 71--90.

\bibitem{DVJ2}
D. J.~Daley, D.~Vere-Jones, {\it An Introduction to the Theory of Point Processes. Vol. II: General Theory and Structure}, 2nd edition, Probability and Its Applications, Springer, New York, 2008.

\bibitem{FKK2021}
D. Finkelshtein, Y. Kondratiev, P. Kuchling,
{\it Markov dynamics on the cone of discrete Radon measures}, Methods Funct. Anal. Topology {\bf 27}(2) (2021), 173--191.

\bibitem{FKLO16}
D.~Finkelshtein, Y.~Kondratiev, E.~Lytvynov, M.~J. Oliveira,
{\it An infinite dimensional umbral calculus},
J. Funct. Anal. {\bf 276}(12) (2019), 3714--3766.

\bibitem{FKLO21}
D.~Finkelshtein, Y.~Kondratiev, E.~Lytvynov, M.~J.~Oliveira,
{\it Stirling operators in spatial combinatorics},
J. Funct. Anal. {\bf 282}(2) (2022), 109285.

\bibitem{GGV}
I.~M.~Gel'fand, M.~I.~Graev, A.~M.~Vershik,
{\it Models of representations of current groups},
In {\em Representations of Lie Groups and Lie Algebras},
pp. 121--179, Akad. Kiad\'o, Budapest, 1985.

\bibitem{MR3041709}
D.~Hagedorn, Y.~G.~Kondratiev, T.~Pasurek, M.~R\"{o}ckner,
{\it Gibbs states over the cone of discrete measures},
J. Funct. Anal. {\bf 264} (2013), 2550--2583.

\bibitem{Kal2017}
O.~Kallenberg, {\it Random Measures, Theory and Applications}, Probability Theory and Stochastic Modelling, volume
77, Springer, Cham, 2017.

\bibitem{K93}
J.~F.~C.~Kingman, {\it Poisson Processes}, Oxford University Press, New York, 1993.

\bibitem{harmana}
Y.~G.~Kondratiev, T.~Kuna,
{\it Harmonic analysis on configuration space. I. General theory},
Infin. Dimens. Anal. Quantum Probab. Relat. Top. {\bf 5} (2002), 201--233.

\bibitem{KKL2015}
Y. G. Kondratiev, T. Kuna, E. Lytvynov, {\it A moment problem for random discrete measures},
Stochastic Process. Appl. {\bf 125} (2015), 3541--3569.

\bibitem{KLV2015}
Y. Kondratiev, E. Lytvynov, A.~Vershik,
{\it Laplace operators on the cone of Radon measures}, J. Funct. Anal. {\bf 269} (2015), 2947--2976.

\bibitem{KSS1998}
Y.G. Kondratiev, J.L. da Silva, L. Streit, {\it Differential geometry on compound Poisson space},
Methods Funct. Anal. Topology {\bf 4}(1) (1998), 32--58.

\bibitem{KL2000}
Y.~G.~Kondratiev, E.~W.~Lytvynov, {\it Operators of gamma white noise calculus}, Infin. Dimens. Anal. Quantum Probab. Relat. Top. {\bf 3}(3) (2000), 303--335.

\bibitem{KL2005}
Yu. Kondratiev, E. Lytvynov, {\it Glauber dynamics of continuous particle systems}, Ann. Inst. H. Poincar\'{e} Probab. Statist. {\bf 41}(4) (2005), 685--702.

\bibitem{KLU1999}
Y.~G. Kondratiev, E.~W.~Lytvynov, G.~F.~Us, {\it Analysis and geometry on $\mathbb{R}_+$-marked configuration space}, Methods Funct. Anal. Topology {\bf 5}(1) (1999), 29--64.

\bibitem{KSSU1998}
Y.G. Kondratiev, J.L. da Silva, L. Streit, G.F. Us, {\it Analysis on Poisson and Gamma spaces}, Infin. Dimens. Anal.
Quantum Probab. Relat. Top. {\bf 1}(1) (1998), 91--117. 

\bibitem{KKS1998}
Y.G. Kondratiev, T. Kuna, J.L. da Silva, {\it Marked Gibbs measures via cluster expansion}, Methods Funct. Anal.
Topology {\bf 4}(4) (1998), 50--81.

\bibitem{Len1973}
A. Lenard, {\it Correlation functions and the uniqueness of the state in classical statistical mechanics}, Comm. Math. Phys. {\bf 30} (1973), 35--44. 

\bibitem{MR1992}
Z. M. Ma, M. R\"{o}ckner, {\it Introduction to the Theory of (Nonsymmetric) Dirichlet Forms}, Universitext,
Springer-Verlag, Berlin, 1992.

\bibitem{Mec67}
J.~Mecke, {\it Station{\"a}re zuf{\"a}llige Ma{\ss}e auf lokalkompakten Abelschen Gruppen},
Z. Wahrscheinlichkeitstheorie {\bf 9} (1967), 36--58.

\bibitem{RS}
M.~Reed, B.~Simon,
{\it Methods of Modern Mathematical Physics. {II}. {F}ourier
Analysis, Self-adjointness}, Academic Press, New York, London, 1975.

\bibitem{TVY}
N.~Tsilevich, A.~Vershik, M.~Yor, {\it An infinite-dimensional analogue of the Lebesgue measure and distinguished
properties of the gamma process}, J. Funct. Anal. {\bf 185} (2001), 274--296.

\end{thebibliography}
\end{document}